\documentclass[%
 aapm,
 mph,%
 amsmath,amssymb,
 preprint,%
]{revtex4-2}

\setcitestyle{authoryear,round}
\usepackage{graphicx}
\usepackage{dcolumn}
\usepackage{amsthm,amsmath} 
\usepackage{algorithm}
\usepackage{algpseudocode}
\usepackage{bm}
\usepackage{comment}
\usepackage{hyperref}
\usepackage{etoolbox}

\makeatletter
\renewcommand\bibitem{\par\vspace{8pt}\@ifnextchar[\@lbibitem\@bibitem}
\makeatother

\def\E{{\sf{E}}}
\def\Pm{{\mathbb{P}}}

\def\Qm{{\mathbb{Q}}}

\def\R{{\mathbb{R}}}
\def\I{{\sf{I}}}
\def\Borel{{\cal{B}}}
\newtheorem{theorem}{Theorem}
\newtheorem{proposition}{Proposition}
\newtheorem{corollary}{Corollary}

\usepackage[mathlines]{lineno}

\usepackage{xcolor}

\begin{document}

\title{Bridge change-of-measure for propagator estimation of diffusions}
\author{\small Javier Aguilar}
\affiliation{\small Quantitative Life Sciences, The Abdus Salam International Centre for Theoretical Physics
(ICTP), Trieste, Italy \\
{\tt \small jaguilar@ictp.it} --
{\tt \small orcid.org/0000-0001-7842-6261} }
\author{\small Riccardo Gatto}
\affiliation{\small Institute of Mathematical Statistics and Actuarial Science, University of Bern, Alpeneggstrasse 22, 3012 Bern, Switzerland \\
{\tt \small riccardo.gatto@unibe.ch} --
{\tt \small orcid.org/0000-0001-8374-6964} }
\date{\small \today}
\begin{abstract}
{\small {\bf Abstract} \
The propagator of one-dimensional diffusions has relevant applications in statistical inference and across many disciplines. Yet it is obtained in closed form only for very few simple models. Here we analyze an importance sampling strategy based on a bridge change of measure to estimate propagators: paths are generated under an auxiliary diffusion whose bridge between the two fixed boundary points is easy to generate, and the target propagator is recovered through a Radon–Nikodym derivative or likelihood ratio correction derived from Girsanov’s theorem. This method overcomes the main limitations of the crude Monte Carlo estimator, which relies on binning in state space, introducing bias and becoming inefficient with rare transitions. We apply the method to a range of benchmark processes and show that the bridge-based estimator matches the accuracy of crude Monte Carlo with orders of magnitude fewer sample paths. Finally, we elaborate on discretization schemes for the Radon–Nikodym derivative and on approximations to the best-in-class change of measure, both of which increase the efficiency of the method.}
\\
{\small {\bf Keywords} \
Girsanov theorem --
Importance sampling --
Lamperti transform --
Likelihood ratio --
Markov property --
Ornstein-Uhlenbeck process --
Radon-Nikodym derivative --
Rare transition --
Wiener process} \\
{\small {\bf MSC2020} \ 60H10 Stochastic ordinary differential equations (aspects of stochastic analysis); 65C05 Monte Carlo methods}
\end{abstract}

\maketitle

\tableofcontents

\section{Introduction}

The propagator represents the major tool for the analysis of the distributional properties of Markov processes. Since it encapsulates all the probabilistic information about the evolution of the process, it 
is a central object in the physical, biological, economic and social sciences where such models are used~\cite{Van_Kampen_book,Risken_book,azaele_statistical_2016,Gardiner2009}. 
Moreover, the propagator is the important component for the construction of the likelihood of Markov models, making it essential for both Bayesian and maximum likelihood inference.

In particular, we will consider the propagator of diffusions described through stochastic differential equation (SDE) of the form
\begin{equation}\label{eq:general_SDE}
dX_t = A(X_t)\,dt + B(X_t)\,dW_t, \; t \ge 0,
\end{equation}
where $A,B$ are appropriately chosen real continuous (thus measurable) functions and where 
$W=\{W_t\}_{t \ge 0}$ is the standard Wiener process.
Eq.~\eqref{eq:general_SDE} as well as all other SDEs and stochastic integrals 
in this text are meant in the Itô sense.  

Two important instances of the SDE in Eq.
\eqref{eq:general_SDE} 
that admit the
strong solutions $X=\{X_t\}_{t \ge 0}$
are the following.
The geometric Brownian motion, with $A(x) = \mu x$,
$B(x) = D x$, for given $\mu \in \R,D > 0$,
is given by
$X_t = X_0 \exp\{(\mu - D^2/2) t + D W_t\}$.
The Ornstein-Uhlenbeck (OU) process, with $A(x) = - k x$,
$B(x) = D$, for given $k,D > 0$, is given by
\begin{align}                                           \label{e33}
X_t = e^{-k t} \left(X_0 +  D \int_0^t e^{k s} d W_s \right)
=e^{-k t} X_0 +  D \left(W_t-k\int_0^t e^{-k (t-s)} W_s 
d s \right);
\end{align}
the last formula demonstrating that it is clearly a strong 
solution. Further examples treated in this paper are listed in Table~\ref{tab:models}.

The propagator is defined as the conditional density 
of $X_t$ given $X_s$, $0 \le s < t$, namely through 
\begin{equation}\label{eq:def_propagator}
    \rho_{t|s}^X \left(x \mid y \right)=\lim_{h \to0} \frac{1}{h}
    {\Pm}\left[X_t \in[x,x+h) \big| X_s =y\right], \;
    x,y \in \R,
\end{equation}
provided $ {\Pm}\left[X_s \in [y,y+d y)\right]>0$.
The objective of this article is to estimate Eq. \eqref{eq:def_propagator}.
Because the process $X$ is Markovian,
the above conditional probability does not depend on events
generated by $\{X_u\}_{0\le u < s}$.
The propagator is characterized by the Fokker-Planck or forward Kolmogorov equation
\begin{equation}\label{eq:FPK-equation} 
\partial_t  \, \rho_{t|s}^X \left(x \mid y \right) = -\partial_x \left[A(x) \,\rho_{t|s}^X \left(x \mid y \right)\right] +\partial^2_x \left[\frac{B^2(x)}{2} \,\rho_{t|s}^X \left(x \mid y \right)\right],
\end{equation}
with the initial condition $\rho_{s\mid s}^X(x \mid y)=\delta(x-y)$, 
where $\delta$ 
stands for the Dirac delta function and
$\partial_z$ for the operator of partial differentiation 
with respect to
(w.r.t.) the real variable $z$.

The propagator is one of the central objects for inference on SDEs from 
 a sample 
 $(t_1,x_1),\ldots,(t_m,x_m)$, since the associated likelihood reads
\begin{equation}
    \mathcal{L}=\prod_{i=1}^{m-1}\rho^X_{t_{i+1}\mid t_{i}}(x_{i+1}\mid x_i);
\end{equation}
cf. ~\cite{sorensen_parametric_2004,craigmile_statistical_2023}.

The general issue with the definition of the propagator in Eq.
~\eqref{eq:def_propagator} and its PDE in 
Eq.~\eqref{eq:FPK-equation}
is that they lead to explicit solution in a few limited cases only. 
The inaccessibility of the closed-form expression of the propagator - and, consequently, of the likelihood - is widely regarded as one of the major 
problems of statistical inference with SDEs; cf. e.g. \cite{iacus_simulation_2008,sorensen_parametric_2004,craigmile_statistical_2023}. Classical approaches to circumvent this difficulty include approximating the propagator, by numerically integrating Eq.~\eqref{eq:FPK-equation}~\cite{Toral2014}, through expansions in orthogonal functions~\cite{ait-sahalia_maximum_2002} or by employing Monte Carlo (MC) methods. 

The aim of this work is to provide a rigorous analysis of a Monte Carlo method recently used in~\cite{PRX_Aguilar_2026} for statistical inference on model parameters. Specifically, we prove the main results on which the method rely and extend its range of applicability. This method belongs to the family of importance sampling or change-of-measure techniques and uses change-of-measure methods combined with  bridge processes and their propagators.

The structure of the article is the following.
In Sec.~\ref{sec:MC}, the Crude Monte Carlo propagator estimator is discussed.
In Sec.~\ref{sec:BCMMC}, the Bridge change-of-measure Monte Carlo 
estimator (BCMMCE) of the propagator is introduced.
Two discretization schemes for the evaluation of a path 
integral appearing in the BCMMCE are
presented in Sec.~\ref{sec:discretization_PI}.
On the practical side, Sec.~\ref{sec:algorithms} provides the two algorithms of Bridge change-of-measure Monte Carlo propagator estimation:
the estimation algorithm with Wiener bridges in Sec.~\ref{sec:algorithm} and
the estimation algorithm with arbitrary Markov bridges in Sec.~\ref{sec:algorithm_general}.
Then, Sec.~\ref{sec:Approximation} suggests a parametric approximation to the best-in-class bridge measure.
Sec.~\ref{VII} presents the results of the numerical study.
Applications of Wiener bridges to the Ornstein-Uhlenbeck, Cox-Ingersol-Ross and stochastic logistic target processes are presented in Sec.~\ref{sec:OU_w_wiener_Bridges}.
Applications of Ornstein-Uhlenbeck bridges to the Ornstein-Uhlenbeck and stochastic logistic target processes are given in Sec.~\ref{sec:varying_bridge}.
The approximation to a parametric best-in-class Ornstein-Uhlenbeck bridge is given in Sec.~\ref{sec:best_in_class_approx} and comparisons of path integral discretizations can be found in Sec.~\ref{sec:compare_PI_discretizations}.
Sec.~\ref{sVIII} concludes the article.

The following notation is used.\\
\noindent \hspace*{12mm}
$f'(s)= (d/d s) f(s)$, 
$f''(s)=  (d/d s)^2 f(s)$  \\
\noindent \hspace*{12mm} $\partial_t f(s,t) = (\partial/\partial t) f(s,t)$ \\
\noindent \hspace*{12mm}
$f^{(-1)}$: inverse or reciprocal of $f$  
\\
\noindent \hspace*{12mm} $\delta(x)$: 
Dirac delta function \\
\noindent \hspace*{12mm}
$\I\{S\}$: indicator of statement or set $S$ \\
\noindent \hspace*{12mm} $\E_\Qm$: expectation under probability 
measure $\Qm$ \\
\noindent \hspace*{12mm} $\Pm_Z$: probability measure
induced by stochastic process $Z$

\section{Crude Monte Carlo propagator estimation and its limitations}\label{sec:MC}

Before introducing the proposed 
bridge change-of-measure Monte Carlo (BCMMC) estimator, we first
introduce the underlying formal framework for SDEs and then summarize the 
steps for obtaining the sample mean or crude MC estimator of the propagator.

Consider $C_T \subset \R^{[0,T]}$
the space of continuous functions from $[0,T]$ to $\R$.
Associated to the distance
$d(\omega_1,\omega_2)= \sup_{t \in [0,T]} |\omega_1(t) - \omega_2(t)|$, $\forall \omega_1,\omega_2 \in C_T$,
the set $C_T$ forms a complete metric space.
Denote by $\Borel(C_T)$ the $\sigma$-algebra generated by the open sets of
$C_T$, that are obtained by the metric $d$.
Then $(C_T,\Borel(C_T))$ is a measurable space and
a stochastic process with continuous 
sample paths in this space is an element denoted 
$\omega = \omega_{[0,T]} \in C_T$. 
Let $0 \le s < t \le T$ and denote by
$\omega_{[s,t]} = \{\omega(u)\}_{s \le u \le t}$
the truncated process.
We then define the coordinate process
\begin{equation}                                            \label{e10}
W_t = W (t,\omega) = \omega(t) \, \text{ and } \,
W_{[s,t]} = \{W(u,\omega)\}_{s \le u \le t}=\omega_{[s,t]}.
\end{equation}
With this, $W=W_{[0,T]}=\{W_t\}_{t \in [0,T]}$
is a random element in $(C_T,\Borel(C_T))$. Kolmogorov extension
theorem tells that there exists a
probability measure ${\mathbb P}$ on
$(C_T,\Borel(C_T))$ under which $\omega_{[0, T]}$ or
the coordinate process
$W_{[0, T]}$
is the standard Wiener process with time domain $[0,T]$;
cf. p. 200-203 of \cite{shiryaev2016probability2}. Note that, although this construction can be extended to the time interval $[0,\infty)$, the analysis of absolute continuity of measures presented in the subsequent sections necessitates restricting attention to a finite time horizon $[0,T]$, where $T$ is arbitrary but fixed and may be chosen arbitrarily large.

The process \(X=X_{[0,T]}=\{X_t\}_{0\le t \le T}\), namely the diffusion implicitly defined through the SDE in Eq.~\eqref{eq:general_SDE} and 
restricted to $[0,T]$,
becomes a functional 
of the Wiener process $W_{[0, T]}$ taking the following form:
$X_t(\omega_{[0,t]})$.
Then, $X$ induces a new probability measure on the same
space $(C_T,\Borel(C_T))$ that we denote $\Pm_X$.
Define by $\Pm_{X;s,y}$ the conditional probability measure
\begin{align}                                           \label{e20}
\mathbb{P}_{X;s,y}[A]=\frac{\Pm_X[
 \{ \omega \in A \mid w_s \in [y,y+dy)\}]}
{\Pm_X[\{ \omega \in C_T \mid w_s \in [y,y+dy)\}]},
\end{align}
for any  $s \in [0, T]$, 
$A \in \Borel(C_T)$ and such that
$\Pm_X[\{ \omega \in C_T \mid w_s \in [y,y+dy)\}] \neq 0$.
The propagator of the process $X$ at the crossing points 
$(s,y)$, $(t,x) \in [0,T] \times \R$ is
$\rho_{t|s}^X \left(x \mid y \right)$,
given in Eq.
\eqref{eq:def_propagator}.
According to the context, we use the alternative notations
$
\rho_{t|s}^{\Pm_X} \left(x \mid y \right) =
\rho_{t|s}^X \left(x \mid y \right)$, 
this notation holding not only for the process $X$ but
also for the other processes considered.
The propagator can be re-expressed as
\begin{align}\label{eq:propagator_as_expected_value}
\rho_{t|s}^X \left(x \mid y \right) & = 
\int_{C_T} \delta(X_t(\omega_{[0,t]}) - x) \, d \Pm_{X;s,y}(\omega) 
= \E_{\Pm_{X;s,y}}\left[\delta(X_t - x)\right].
\end{align}
We will use the induced distribution of the process instead of
the process itself as upper index, whenever this will be required.
For later use, we note that Eq.~\eqref{eq:propagator_as_expected_value} gives
$\rho^X_{t|t} \left(x \mid x \right) = 1$ as well as
the alternative formula
$$\rho^X_{t|s} \left(x \mid y \right)  =
\E_{\Pm_X}\left[\delta(X_t - x) 
\mid X_s = y\right].$$

The sample mean or crude MC estimator (CMCE) of the propagator in Eq.~\eqref{eq:propagator_as_expected_value} is as follows: 
one draws a large number $n$ of trajectories from $\Pm_{X;s,y}$, using 
Euler-Maruyama's or Milstein's schemes, see e.g. ~\cite{Asmussen2007,Toral2014}, one replaces the Dirac delta function in Eq. \eqref{eq:propagator_as_expected_value} by the 
indicator of a small neighborhood of $x$, standardized by the size of the
neighborhood, and one takes the mean. This yields the CMCE
\begin{equation}\label{eq:MC_estimator_prop}
    \hat{\rho}_{t|s;n}^{X \; {\rm CMC}} (x|y) = \frac{1}{\Delta x} \frac{1}{n}\sum_{i=1}^n \I\left\{X^{(i)}_t\in\left[x,x+\Delta x\right)\right\},
\end{equation}
where $\I\{S\}$ denotes the indicator of statement $S$,
$X^{(i)}_t$ the $i^\text{th}$ sampling from $\Pm_{X;s,y}$,
for $i=1,\ldots,n$,
and $\Delta x>0$ is an appropriately small chosen discretization unit.
Due to the finiteness of $\Delta x$, this estimator is neither
unbiased nor consistent (as $n \to \infty$). According to
\cite{aguilar_biased_2024},
\begin{equation}
\E_{\Pm_{X;s,y}}\left[\hat{\rho}_{t|s;n}^{X \; {\rm CMC}} (x|y)  - \rho_{t\mid s} (x|y) \right]=\varepsilon(\Delta x),
\end{equation}
where $\varepsilon(\Delta x)$ is an unknown function that obeys $\lim_{\Delta x\to0}\varepsilon(\Delta x)=0$. Such bias can be particularly problematic in inference problems, where high-precision estimates of the propagator are required. Moreover, discretization units $\Delta x$ making $\varepsilon(\Delta x)$ sufficiently small are problem-dependent, so each application requires a careful 
analysis. A second issue is that estimating the propagator in rare transitions (i.e., where \(\mathrm{var}_{\Pm_{X;s,y}}\left(
\hat{\rho}_{t|s;n}^{X \; {\rm CMC}} (x|y)
 \right)\) can become excessively large
relative to $\{\rho_{t|s}^X \left(x \mid y \right)\}^2$, 
see \cite{Asmussen2007}) 
may require a prohibitively large number of generations in order to 
control the sampling error.
The method proposed in this article makes use of a bridge change-of-measure
in order to overcome these two limitations, as it completely avoids the use of discretizations and incorporates the necessary variance reduction needed to efficiently sample rare events.

\section{Bridge change-of-measure Monte Carlo propagator estimation}
\label{sec:BCMMC}

The main idea of importance sampling is to introduce an alternate
sampling process 
$Y=Y_{[0,T]}=\{Y_t\}_{0 \le t \le T}$ 
on the measurable space $(C_T,\Borel(C_T))$,
such that
$\Pm_X \ll \Pm_Y$, i.e. $\Pm_X$ is absolutely continuous w.r.t. 
to the induced probability measure of $Y$, $\Pm_Y$. In this case, the
Radom-Nikodym derivative $d\Pm_X/d \Pm_Y$ exists.
In addition, $Y$ or rather $\Pm_Y$ are chosen so to offer some
desirable sampling properties
that $\Pm_X$ does not; see e.g. ~\cite{Asmussen2007,gatto2015}.

Let us first remind the change-of-measure or likelihood ratio formula for diffusions,
commonly known as Girsanov's formula.
We recall that  $\Pm$ is the Wiener measure and
$W$ is its Wiener process, as given in Eq. \eqref{e10},
and that
$X=X_{[0,T]}=\{X_t\}_{0 \le t \le T}$ 
is the strong solution
of the SDE \eqref{eq:general_SDE} rewritten below,
\begin{equation*}
dX_t = A(X_t)\,dt + B(X_t)\,dW_t, \; 0 \le t \le T.
\end{equation*}
Assume that, for the real continuous function
$\tilde{A}$, the SDE
\begin{equation}                            \label{eq:general_SDE_V2}
dY_t = \tilde{A}(Y_t)\,dt + B(Y_t)\,dW_t, \; 0 \le t \le T ,
\end{equation}
admits the strong solution
$Y=Y_{[0,T]}=\{Y_t\}_{0 \le t \le T}$.
Then the following formula for the Radon-Nikodym derivative or 
likelihood ratio holds,
\begin{equation}                                \label{eq_girs}
   L^{X \mid Y}(Y_{[0,T]}) = \frac{d\Pm_X}{d\mathbb{P}_Y} 
   \Big|_{[0,T]}= \exp\left\{
    \int_0^T  \frac{A\left(Y_u\right)-\tilde{A}\left(Y_u\right)}{B^2\left(Y_u\right)}\,dY_u - \frac{1}{2} \int_0^T \frac{A^2\left(Y_u\right)-\tilde{A}^2\left(Y_u\right)}{B^2\left(Y_u\right)}\,du\right\},
\end{equation}
which 
is $\Pm_Y$-a.s. unique,
see e.g. Eq. (10.60) at p. 287 of \cite{klebaner_introduction_2012}.
This is Girsanov's formula, which is basically 
the exponential tilt of a Gaussian random vector;
see pp. 129-131 of \cite{Asmussen2007}.
Note that the fact that the SDEs in Eqs. ~\eqref{eq:general_SDE}
and \eqref{eq:general_SDE_V2} share the same volatility $B$ implies that
the measures $\Pm_X$ and $\Pm_Y$ are equivalent.
It is thus central to keep the same volatility, otherwise 
these two measures would be singular (namely probabilize disjoint subsets
of $C_T$).
Let $0 \le s < t \le T$ and $y \in \R$.
Denote by
$\Pm_{Y;s,y}$ the induced conditional probability of $Y$,
defined in analogy with Eq. \eqref{e20}.
The Change-of-measure theorem allows to
re-express the propagator in Eq.~\eqref{eq:propagator_as_expected_value} as
\begin{equation}\label{eq:propagator_change_of_measure}
\rho_{t|s}^X \left(x \mid y \right) =  \E_{\mathbb{P}_{Y;s,y}}\left[
\delta(X_t - x) \frac{d\Pm_{X;s,y}}{d\mathbb{P}_{Y;s,y}}
\Big|_{[s,t]} \right].
\end{equation}
It follows from the
Girsanov formula given in
Eq. \eqref{eq_girs} and from the Markov property that the Radon-Nikodym derivative inside the expectation of Eq. \eqref{eq:propagator_change_of_measure} is 
$\Pm_{Y;s,y}$-a.s. uniquely given by
\begin{equation}\label{eq:RN_der}
L^{X \mid Y}(Y_{[s,t]}) =
\frac{d\Pm_{X;s,y}}{d\mathbb{P}_{Y;s,y}} \Big|_{[s,t]}= 
   e^{(t-s)l^{X \mid Y}_{s,t}},
\end{equation}
with
\begin{equation}\label{eq:RN_der_2}
     l_{s,t}^{X \mid Y} = \frac{1}{t-s}
     \left(
     \int_s^t  \frac{A\left(Y_u\right)-\tilde{A}\left(Y_u\right)}{B^2\left(Y_u\right)}\,d Y_u - \frac{1}{2} \int_s^t \frac{A^2\left(Y_u\right)-\tilde{A}^2\left(Y_u\right)}{B^2\left(Y_u\right)}\,du \right).
\end{equation}

We now construct a change-of-measure induced by a bridge process, which proves to be particularly well-suited for the present setting.
By conditioning the induced measure $\Pm_Y$ to the points
$(s,a)$, $(t,b) \in [0,T] \times \R$, $s<t$, which amounts to force 
the process $Y$ to cross these two points,
one obtains the bridge version of $\Pm_Y$ denoted
$\mathbb{P}_{Y;s,a}^{t,b}$. 
This bridge probability measure $\mathbb{P}_{Y;s,a}^{t,b}$ is given by
\begin{align}                                \label{eq:bridge_meas}            
\mathbb{P}_{Y;s,a}^{t,b}[A]=
\frac{\Pm_Y[
 \{ \omega \in A \mid w_s \in [a,a+da), w_t \in [b,b+db)\}]}
{\Pm_Y[\{ \omega \in C_T \mid w_s \in [a,a+da), w_t \in [b,b+db)\}]},
\end{align}
for any $0 \le s < t \le T 
$, 
$A \in \Borel(C_T)$ and such that
$\Pm_X[\{ \omega \in C_T \mid w_s \in [a,a+da),
w_t \in [b,b+db)
\}] \neq 0$.
 Typically the bridge measure is characterized through its infinitesimal generator, which can be obtained as a Doob-h transform of the generator of the unconditional measure; see ~\cite{Asmussen2007,Chetrite2013,Chetrite2015}.

Clearly,
$\mathbb{P}_{Y;s,a}^{t,b}[A]> 0 \Rightarrow
\mathbb{P}_{Y;s,a}[A]>0$, $\forall A \in {\cal B}(C_T)$:
$\mathbb{P}_{Y;s,a}^{t,b} \ll \mathbb{P}_{Y;s,a}$ on $C_T$.
But for $\mathbb{P}_{Y;s,a} \ll \mathbb{P}_{Y;s,a}^{t,b}$ does
not hold, in general: the process $Y$ may not cross  
$(s,a)$ and $(t,b)$ but cross $(s,a)$ alone.
This leads us to consider a
randomization procedure
used in \cite{Aguilar_Gatto_2024} with discrete time Markov processes,
in which the fixed value $b$ at time $t$ of the bridge is replaced by a random
variable $B$. 
Assume that $B$ has density $g$ on $R$ and consider 
the mixed bridge measure $\tilde{\mathbb{P}}^{t,B}_{Y;s,a}$
characterized as follows. 
Under $\tilde{\mathbb{P}}^{t,B}_{Y;s,a}$,
the diffusion $Y$
conditional on $B=b$ has same distribution as $Y$
under ${\mathbb{P}}^{t,b}_{Y;s,a}$.
Without the condition $B=b$,
the diffusion $Y$ under $\tilde{\mathbb{P}}^{t,B}_{Y;s,a}$ passes:
through $a$ at time $s$ and, at time $t$, through 
the random variable $B$ with density $g$; this for
$0 \le s <t\le T$. Because $Y_t=B$ under 
$\tilde{\mathbb{P}}^{t,B}_{Y;s,a}$-a.s., then
$\tilde{\mathbb{P}}^{t,B}_{Y;s,a}$ prescribes the probability 
density $g$ to $Y_t$.
In the limiting case
\begin{equation}                                            \label{eq:forbidden}
g(x)=\delta(x-b),
\end{equation}
the two
measures
$\tilde{\mathbb{P}}^{t,B}_{Y;s,a}$ and
${\mathbb{P}}^{t,b}_{Y;s,a}$ coincide.
But this situation is the initial one, for which
$\mathbb{P}_{Y;s,a} \ll \mathbb{P}_{Y;s,a}^{t,b}$
does not hold.
A sufficient condition for the absolute continuity
$\mathbb{P}_{Y;s,a} \ll \tilde{\mathbb{P}}^{t,B}_{Y;s,a}$, namely
$\mathbb{P}_{Y;s,a}[A]>0 \Rightarrow \tilde{\mathbb{P}}^{t,B}_{Y;s,a}
[A]>0$, $\forall A \in {\cal B}(C_T)$, is that
the density $g$ is positive everywhere on $\R$.

These considerations lead to the practical Theorem \ref{th:RN_Q_Q^B_rand}.
\begin{theorem}[Bridge mixed likelihood ratio]\label{th:RN_Q_Q^B_rand}
Let $0 \le s < t \le T$ and $a,b\in \R$.
Let $B$ be a random variable with positive density $g$ on $\R$.
Let the diffusion $Y=Y_{[0,T]}$ be the strong solution to 
Eq.~\eqref{eq:general_SDE_V2}.
Consider the mixed bridge measure $\tilde{\mathbb{P}}^{t,B}_{Y;s,a}$
on $(C_T,\Borel(C_T))$ that,
conditional on $B=b$, 
coincides with the measure ${\mathbb{P}}^{t,b}_{Y;s,a}$. Then
$\mathbb{P}_{Y;s,a}$ and
$\tilde{\Pm}_{Y;s,a}^{t,B}$
are equivalent
and,
$\tilde{\mathbb{P}}_{Y;s,a}^{t,B}$-a.s.,
\begin{equation}\label{AP_eq:RN_Q_Q^B_V2}
\frac{d\mathbb{P}_{Y;s,a}}{d{\tilde{\mathbb{P}}}_{Y;s,a}^{t,B}}
\Big|_{[s,t]}=
\frac{\rho^{Y}_{t\mid s}(B\mid a)}{g(B)} =
\frac{\rho^{Y}_{t\mid s}(Y_t\mid a)}{g(Y_t)},
\end{equation}
\end{theorem}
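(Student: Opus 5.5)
The plan is to verify the Radon--Nikodym identity directly on the path space restricted to $[s,t]$, using the disintegration of $\mathbb{P}_{Y;s,a}$ along the terminal value $Y_t$. The key observation is that the unconditioned measure $\mathbb{P}_{Y;s,a}$ is itself a mixture of bridges. Conditionally on $Y_t=b$, the law of $Y_{[s,t]}$ under $\mathbb{P}_{Y;s,a}$ is $\mathbb{P}^{t,b}_{Y;s,a}$, by Eq.~\eqref{eq:bridge_meas}, and $Y_t$ has density $\rho^{Y}_{t\mid s}(\cdot\mid a)$ by Eq.~\eqref{eq:def_propagator}. Hence for every $A\in\Borel(C_T)$ measurable w.r.t.\ the coordinates on $[s,t]$,
\begin{equation*}
\mathbb{P}_{Y;s,a}[A]=\int_\R \mathbb{P}^{t,b}_{Y;s,a}[A]\,\rho^{Y}_{t\mid s}(b\mid a)\,db .
\end{equation*}
By the definition of the mixed bridge measure stated in Theorem~\ref{th:RN_Q_Q^B_rand},
\begin{equation*}
\tilde{\mathbb{P}}^{t,B}_{Y;s,a}[A]=\int_\R \mathbb{P}^{t,b}_{Y;s,a}[A]\,g(b)\,db .
\end{equation*}
The two measures therefore share the same family of conditional laws given the terminal value, and they differ only in the law of that terminal value.

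The main computation takes the candidate derivative $Z=\rho^{Y}_{t\mid s}(Y_t\mid a)/g(Y_t)$ and checks that $\E_{\tilde{\mathbb{P}}^{t,B}_{Y;s,a}}[Z\,\I\{A\}]=\mathbb{P}_{Y;s,a}[A]$ for all such $A$. Since $Y_t=b$ holds $\mathbb{P}^{t,b}_{Y;s,a}$-a.s., the factor $Z$ is constant, equal to $\rho^{Y}_{t\mid s}(b\mid a)/g(b)$, under each bridge. Conditioning on $B=b$ then gives
\begin{equation*}
\int_\R \frac{\rho^{Y}_{t\mid s}(b\mid a)}{g(b)}\,\mathbb{P}^{t,b}_{Y;s,a}[A]\,g(b)\,db ,
\end{equation*}
which equals the first display. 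Here positivity of $g$ is used to cancel $g(b)$. Taking $A=C_T$ shows that $Z$ integrates to one. The $\tilde{\mathbb{P}}$-a.s.\ uniqueness follows from the Radon--Nikodym theorem once the two measures are shown to be equivalent. The identity $Y_t=B$ a.s.\ under the mixed measure gives the equality of the two forms in Eq.~\eqref{AP_eq:RN_Q_Q^B_V2}.

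Equivalence then follows from this formula. Since $g>0$ everywhere, $\mathbb{P}_{Y;s,a}\ll\tilde{\mathbb{P}}^{t,B}_{Y;s,a}$ holds with density $Z$. For the converse, the reciprocal $1/Z=g(Y_t)/\rho^{Y}_{t\mid s}(Y_t\mid a)$ is finite $\mathbb{P}_{Y;s,a}$-a.s., because the set where the propagator vanishes has $\mathbb{P}_{Y;s,a}$-probability zero. It serves as the density of $\tilde{\mathbb{P}}^{t,B}_{Y;s,a}$ w.r.t.\ $\mathbb{P}_{Y;s,a}$, provided the propagator is positive. That positivity holds for nondegenerate diffusions with $B>0$, and I would state it as the standing assumption; otherwise equivalence only holds on the support of $\rho^{Y}_{t\mid s}(\cdot\mid a)$.

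The main obstacle is making the disintegration in the first display rigorous. The bridge measure of Eq.~\eqref{eq:bridge_meas} is defined through the heuristic ratio with $db$, so I would instead define $\mathbb{P}^{t,b}_{Y;s,a}$ as a regular conditional distribution of $Y_{[s,t]}$ given $Y_t=b$. Such a distribution exists because $C_T$ is Polish. This makes the first display the disintegration identity itself, valid for Lebesgue-a.e.\ $b$, which is all the integrals require. Measurability of $b\mapsto\mathbb{P}^{t,b}_{Y;s,a}[A]$ must also be checked; it also makes the mixed measure well defined.
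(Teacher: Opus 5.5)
Your proposal is correct and follows essentially the paper's route: disintegrate $\mathbb{P}_{Y;s,a}$ over $Y_t$ into the bridges $\mathbb{P}^{t,b}_{Y;s,a}$ with weight $\rho^{Y}_{t\mid s}(b\mid a)$, multiply and divide by $g(b)$, use that $Y_t=b$ under each bridge, and recognize the mixture defining $\tilde{\mathbb{P}}^{t,B}_{Y;s,a}$ (the paper does this for arbitrary functionals $\tilde{Z}$, you do it for indicators). Your treatment of equivalence is more careful than the paper's, which only invokes $g>0$: that yields $\mathbb{P}_{Y;s,a}\ll\tilde{\mathbb{P}}^{t,B}_{Y;s,a}$, while the reverse direction indeed needs $\rho^{Y}_{t\mid s}(\cdot\mid a)>0$ almost everywhere, exactly as you note.
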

\begin{proof} 
The first claim of equivalence of measures is 
explained just before this theorem.
Let $0 \le s < s' < s'' < t \le T$ and $a,b,x,y, \in R$, for the entire proof.
Since $\tilde{\mathbb{P}}^{t,B}_{Y;s,a}$ is a mixed Bridge measure, for any measurable functional $Z:C_T\to\mathbb{R}$ it holds
\begin{equation}\label{eq:mixture}
    \E_{\tilde{\mathbb{P}}^{t,B}_{Y;s,a}}[Z(Y)]=
    \int_{R} \E_{\tilde{\mathbb{P}}^{t,B}_{Y;s,a}}[Z(Y)\ \mid B = b ] g(b) db
    = \int_{R} \E_{\mathbb{P}^{t,b}_{Y;s,a}}[Z(Y)]g(b) db.
\end{equation}
Further, 
for any other arbitrary measurable functional $\tilde{Z}:C_T\to\mathbb{R}$ we have
\begin{align}\label{eq:identitH_th1}
    \E_{{\mathbb{P}}_{Y;s,a}}[\tilde{Z}(Y)]& =\int_{\R} \E_{{\mathbb{P}}_{Y;s,a}}[\tilde{Z}(Y)\mid
    Y_t=b ] \rho^{Y}_{t|s}(b|a) \,db  = \int_{\R} \E_{{\mathbb{P}}_{Y;s,a}^{t,b}}[\tilde{Z}(Y)]\,
    \rho_{t|s}^{Y}(b|a)\,db \nonumber \\ &
    =\int_{\R} \E_{{\mathbb{P}}_{Y;s,a}^{t,b}}[\tilde{Z}(Y)\rho^{Y}_{t|s}(b|a)] \,db 
     =\int_{\R} \E_{{\mathbb{P}}_{Y;s,a}^{t,b}}[\tilde{Z}(Y)\rho^{Y}_{t|s}(Y_t|a)] \,db \nonumber \\ &
     =\int_{\R} \E_{{\mathbb{P}}_{Y;s,a}^{t,b}}\left[\tilde{Z}(Y)\rho^{Y}_{t|s}(Y_t|a) \frac{g(b)}{g(b)}\right]  \,db
     =\int_{\R} \E_{{\mathbb{P}}_{Y;s,a}^{t,b}}\left[\tilde{Z}(Y)\frac{\rho^{Y}_{t|s}(Y_t|a)}{g(Y_t)}\right] g(b) \,db  \nonumber \\ & =
\E_{\tilde{\mathbb{P}}^{t,B}_{Y;s,a}}\left[\tilde{Z}(Y)\frac{\rho^{Y}_{t|s}(Y_t|a)}{g(Y_t)}\right],
     \end{align}
     where the last equality follows from Eq.~\eqref{eq:mixture}
     with $Z(Y) = \tilde{Z}(Y) \rho^{Y}_{t|s}(Y_t|a)/g(Y_t)$.
     Comparing 
     the Change-of-measure theorem
\begin{align*}
     \E_{{\mathbb{P}}_{Y;s,a}}[\tilde{Z}(Y)]& =
      \E_{\tilde{\mathbb{P}}^{t,B}_{Y;s,a}}\left[\tilde{Z}(Y) 
      \frac{d {\mathbb{P}}_{Y;s,a}}{d {\tilde{\mathbb{P}}}^{t,B}_{Y;s,a}}
      \Big|_{[s,t]}
      \right]
\end{align*}
with Eq.~\eqref{eq:identitH_th1} and considering that $\tilde{Z}$ is arbitrary, we obtain the likelihood ratio formula 
\begin{equation*}
\frac{d\mathbb{P}_{Y;s,a}}{d{\tilde{\mathbb{P}}}_{Y;s,a}^{t,B}} 
\Big|_{[s,t]} =
\frac{\rho^{Y}_{t\mid s}(Y_t\mid a)}{g(Y_t)},
\end{equation*}
which coincides with Eq.~\eqref{AP_eq:RN_Q_Q^B_V2}.
\end{proof}

A convenient choice for the density of $B$ is  
given 
by $g(x) = 
\phi(x;b,\sigma)$,
where
\begin{equation}                                    \label{eq:gauss}
\phi(x;\mu,\sigma)= \frac{1}{\sqrt{2\pi\sigma^2}}
e^{-\frac{(x-\mu)^2}{2\sigma^2}}, \quad
\forall x \in \R,
\end{equation}
is the Gaussian density with mean $\mu\in \R$ and standard deviation $\sigma>0$.
Because we can chose $\sigma$ as small as we wish,
we can be arbitrarily close to the situation with
the forbidden density \eqref{eq:forbidden}.

It is remarkable that the Radon-Nikodym derivative in 
Eq.~\eqref{AP_eq:RN_Q_Q^B_V2} depends only on the initial and final states of the process, that are $a$ and $B=Y_t$, and not on the whole path, as it happens in Eq.~\eqref{eq:RN_der}.

We can now show
Theorem \ref{tpropag}, which is our major result that 
relates the propagators under 
two different bridge measures.
\begin{theorem}[Bridge change-of-variable for propagator]    \label{tpropag}
Let
$0\le s < t \le T$ and $b,y \in \R$.
Let $\Pm_{X;s,y}$ and $\Pm_{Y;s,y}$ respectively 
be the measures induced by $X$ and $Y$, conditional on the crossing point
with bridge point $(s,y)$, as defined in \eqref{e20}.
Assume that
$\rho^{Y}_{t\mid s}(x|y)$
is a bounded and continuous function of $x\in \R$.
The propagators 
of $X$ and $Y$ are related through the change-of-variable formula
\begin{align}                           \label{eq:propagator_BCM}
    \rho^{X}_{t|s} \left(b \mid y \right)  =\E_{\Pm^{t,b}_{Y;s,y}}\left[\frac{d\Pm_{X;s,y}}{d\Pm_{Y;s,y}} \Big|_{[s,t]} \right]\rho^{Y}_{t\mid s}(b|y),
\end{align}
where $d\Pm_{X;s,y}/d\Pm_{Y;s,y} \mid_{[s,t]} $
is given in Eqs. \eqref{eq:RN_der} and 
\eqref{eq:RN_der_2}.
\end{theorem}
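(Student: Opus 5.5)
The plan is to start from the change-of-measure representation in Eq.~\eqref{eq:propagator_change_of_measure}, replace the Dirac delta by a smooth mollifier, and then insert the mixed bridge measure of Theorem~\ref{th:RN_Q_Q^B_rand} so that the delta is absorbed by the bridge conditioning. Concretely, fix a Gaussian density $g_\sigma(x)=\phi(x;b,\sigma)$ as in Eq.~\eqref{eq:gauss}. Write $L=d\Pm_{X;s,y}/d\Pm_{Y;s,y}\mid_{[s,t]}$, which by Eqs.~\eqref{eq:RN_der}--\eqref{eq:RN_der_2} is a functional of $Y_{[s,t]}$ only. By Eq.~\eqref{eq:propagator_change_of_measure}, read with the delta replaced by $g_\sigma$, we have
\begin{equation*}
\int_\R \rho^X_{t|s}(x\mid y)\,g_\sigma(x)\,dx=\E_{\Pm_{Y;s,y}}\left[g_\sigma(Y_t)\,L\right].
\end{equation*}

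Next we apply Theorem~\ref{th:RN_Q_Q^B_rand} with $a=y$ and $g=g_\sigma$, which has positive density everywhere. We change measure from $\Pm_{Y;s,y}$ to $\tilde\Pm^{t,B}_{Y;s,y}$ with the likelihood ratio $\rho^Y_{t|s}(Y_t\mid y)/g_\sigma(Y_t)$. The factor $g_\sigma(Y_t)$ cancels, giving
\begin{equation*}
\E_{\tilde\Pm^{t,B}_{Y;s,y}}\left[L\,\rho^Y_{t|s}(Y_t\mid y)\right].
\end{equation*}
The mixture identity Eq.~\eqref{eq:mixture} then turns this into
\begin{equation*}
\int_\R \E_{\Pm^{t,b'}_{Y;s,y}}[L]\,\rho^Y_{t|s}(b'\mid y)\,g_\sigma(b')\,db',
\end{equation*}
since $Y_t=b'$ a.s. under the bridge. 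Writing $h(b')=\E_{\Pm^{t,b'}_{Y;s,y}}[L]\,\rho^Y_{t|s}(b'\mid y)$, we arrive at the identity $\int \rho^X_{t|s}(x\mid y)g_\sigma(x)\,dx=\int h(b')g_\sigma(b')\,db'$, valid for every $\sigma>0$. The same computation works with $g_\sigma$ replaced by any positive density, or indeed any bounded test function, through the disintegration argument of Eq.~\eqref{eq:identitH_th1}. This shows that $\rho^X_{t|s}(\cdot\mid y)=h$ almost everywhere. Equivalently, $h$ is the density of $X_t$ given $X_s=y$, because $\E_{\Pm_{Y;s,y}}[L\,f(Y_t)]=\int f(b')h(b')\,db'$ for all bounded measurable $f$.

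The final step, and the main obstacle, is to upgrade the almost-everywhere equality to pointwise equality at the given $b$, that is, to let $\sigma\to0$. On the left, $\int \rho^X g_\sigma \to \rho^X_{t|s}(b\mid y)$ provided $\rho^X$ is continuous at $b$; this is the natural reading of the definition in Eq.~\eqref{eq:def_propagator} as a limit, and we will assume it. On the right we need $h$ continuous at $b$ and locally bounded. Boundedness and continuity of $\rho^Y_{t|s}(\cdot\mid y)$ are exactly the hypothesis of the theorem. Continuity of $b'\mapsto\E_{\Pm^{t,b'}_{Y;s,y}}[L]$ is what we would argue via the weak continuity of bridge laws in the endpoint, together with uniform integrability of $L$ under nearby bridges.

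If that route proves delicate, a fallback is available: regard $\E_{\Pm^{t,b}_{Y;s,y}}[L]$ as defined only as the version of the conditional expectation $\E_{\Pm_{Y;s,y}}[L\mid Y_t=b]$. Then Eq.~\eqref{eq:propagator_BCM} is read as the Bayes-type identity that the density of $Y_t$ under the tilted measure $L\,d\Pm_{Y;s,y}$, which is $\Pm_{X;s,y}$, equals $\E[L\mid Y_t=b]\,\rho^Y_{t|s}(b\mid y)$. This holds for almost every $b$ and becomes pointwise once continuous versions are selected, which is where the continuity assumption on $\rho^Y$ is used.
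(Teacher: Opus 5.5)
Your argument follows the paper's route: mollify with $\phi(\cdot;b,\sigma)$, apply Theorem~\ref{th:RN_Q_Q^B_rand} with $g=\phi(\cdot;b,\sigma)$ so that the mollifier cancels against the likelihood ratio, decompose with \eqref{eq:mixture}, and let $\sigma\to0$. Your treatment of the left-hand limit is sounder than the paper's. The paper justifies \eqref{e93} by monotone convergence, but $\phi(X_t;b,1/n)$ is not monotone in $n$ (it increases while $1/n>|X_t-b|$) and it tends to $0$ almost surely, so moving the limit inside the expectation would give $0$. What makes the limit work is, as you say, continuity of $\rho^X_{t|s}(\cdot\mid y)$ at $b$. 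You are also right that the almost-everywhere identity $\rho^X_{t|s}(\cdot\mid y)=h$ needs nothing beyond disintegrating $\Pm_{Y;s,y}$ over $Y_t$; Theorem~\ref{th:RN_Q_Q^B_rand} combined with \eqref{eq:mixture} is a repackaging of that disintegration.

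The step you leave open, continuity at $b$ of $b'\mapsto\E_{\Pm^{t,b'}_{Y;s,y}}[L]$, is a genuine gap, and the paper has the same one. In its last limit the paper invokes the bounded--continuous characterization of weak convergence and cites only the hypothesis on $\rho^Y_{t|s}(\cdot\mid y)$. But the integrand is $\E_{\Pm^{t,x}_{Y;s,y}}[L]\,\rho^Y_{t|s}(x\mid y)$, and no assumption controls its first factor. Your fallback does not remove the gap, because selecting a continuous version presupposes exactly this regularity. Note also that weak continuity of the bridge laws alone would not help with $L$ in the form \eqref{eq:RN_der_2}, a stochastic integral that is not a continuous path functional.

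You can instead use the paper's rewriting \eqref{eq:RN_der_3}. Under $\Pm^{t,b'}_{Y;s,y}$ the boundary term is the deterministic $\int_y^{b'}r(z)\,dz$. For continuous $q$, the factor $\exp\{-\frac12\int_s^t q(Y_u)\,du\}$ is a continuous functional of the path for the supremum distance, and it is bounded if $q$ is bounded below. This version also fixes what $L$ means under a bridge law, which is singular with respect to $\Pm_{Y;s,y}$. For the Gaussian bridges used in the paper, passing from $b$ to $b'$ adds a deterministic continuous function proportional to $b'-b$ (for the Wiener bridge, $(u-s)(b'-b)/(t-s)$). Bounded convergence then yields the continuity.

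Once $h$ is continuous at $b$, you need neither the $\sigma\to0$ limit nor continuity of $\rho^X$. The almost-everywhere identity gives $\Pm[X_t\in[b,b+\epsilon)\mid X_s=y]=\int_b^{b+\epsilon}h(b')\,db'$, so the limit in \eqref{eq:def_propagator} equals $h(b)$. Without such an extra hypothesis, only the almost-everywhere statement is proved, by you and by the paper alike.
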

\begin{proof}
Let $0\le s < t < T$ and $b,y \in \R$.

We first note that 
$\Pm_{X;s,y}\ll \gg \Pm_{Y;s,y}$. This
follows directly from 
$\Pm_{X}\ll \gg \Pm_{Y}$ (given that $X$ and $Y$ share the
same volatility function, as mentioned just after Eq. \eqref{eq_girs}) 
and from the definition of these measure conditioned on $(s,y)$ in Eq. 
\eqref{e20}.

Then, let us re-express the propagator
as a limiting case.
Define
\begin{equation}
z_{\sigma}=\E_{\Pm_{X;s,y}}\left[\phi(X_t;b,\sigma)\right].
\end{equation}
We note that $\forall \omega \in \{ X_t \neq b \}$, the
sequence $\phi(X_t(\omega);b,n^{-1})$, $n=1,2,\ldots$,
decreases to 0. Because $\Pm_{X;s,y} [ X_t \neq b ] = 1$,
the Monotone convergence theorem
($\phi(X_t(\omega);b,(n+1)^{-1}) < 
\phi(X_t(\omega);b,n^{-1})$) 
and the propagator formula in
Eq.~\eqref{eq:propagator_as_expected_value} give us the limiting formula
\begin{equation}                                            \label{e93}
    \lim_{n \to \infty} z_{n^{-1}} =
    \E_{\Pm_{X;s,y}}\left[ \lim_{n \to \infty} \phi(X_t;b,n^{-1})\right] =
     \rho^X_{t|s}(b|y).
\end{equation}

We then apply the change-of-measure theorem, the chain rule and Theorem~\ref{th:RN_Q_Q^B_rand} to obtain
\begin{align*}
z_{\sigma}&=\E_{\tilde{\Pm}^{t,B}_{Y;s,y}}\left[\phi(X_t;b,\sigma)\frac{d\Pm_{X;s,y}}{d\tilde{\Pm}^{t,B}_{Y;s,y}} \Big|_{[s,t]} \right]
    =\E_{\tilde{\Pm}^{t,B}_{Y;s,y}}\left[\phi(X_t;b,\sigma)\frac{d\Pm_{X;s,y}}
    {d\Pm_{Y;s,y}} \Big|_{[s,t]} \frac{d\Pm_{Y;s,y}}{d\tilde{\Pm}^{t,B}_{Y;s,y}} \Big|_{[s,t]} \right]\nonumber \\
    &=\E_{\tilde{\Pm}^{t,B}_{Y;s,y}}\left[\frac{\phi(X_t;b,\sigma)}{g(X_t)}\frac{d\Pm_{X;s,y}}{d\Pm_{Y;s,y}}
    \Big|_{[s,t]}
    \rho^{Y}_{t\mid s}(X_t|y)\right],
    \end{align*}
given that $\Pm_{X;s,y}\ll\Pm_{Y;s,y}$.
As before, the function $g$ denotes the density of $B$, which is assumed
positive over $\R$.
Note that $\Pm_{Y;s,y}\ll 
\tilde{\Pm}^{t,B}_{Y;s,y}$
is given by Theorem~\ref{th:RN_Q_Q^B_rand}, which in turn implies
$\Pm_{X;s,y}\ll \tilde{\Pm}^{t,B}_{Y;s,y}$.
But the above result simplifies 
when using $g(x)=\phi(x;b,\sigma)$:
\begin{equation*}
    z_{\sigma}=\E_{\tilde{\Pm}^{t,B}_{Y;s,y}}\left[\frac{d\Pm_{X;s,y}}{d\Pm_{Y;s,y}} \Big|_{[s,t]} \rho^{Y}_{t\mid s}(X_t|y)\right].
\end{equation*}
Lastly, we take the limit $\sigma\to0$ in the last formula 
to obtain the following 
bridge change-of-measure formula for the propagator,
\begin{align}\label{eq:propagator_BCMa}
    \rho^{X}_{t|s} \left(b \mid y \right)  & =\lim_{\sigma\to0}z_{\sigma}  
    =\lim_{\sigma\to0}\E_{\tilde{\Pm}^{t,B}_{Y;s,y}}\left[\frac{d\Pm_{X;s,y}}{d\Pm_{Y;s,y}}
    \Big|_{[s,t]}
    \rho^{Y}_{t\mid s}(B|y)\right] \nonumber \\
    & =
    \lim_{\sigma\to0} \int_{-\infty}^\infty
    \E_{\tilde{\Pm}^{t,B}_{Y;s,y}}
    \left[\frac{d\Pm_{X;s,y}}{d\Pm_{Y;s,y}}
    \Big|_{[s,t]}
    \rho^{Y}_{t\mid s}(B|y) \Big| B=x \right]\phi(x;b,\sigma ) 
    dx \nonumber \\
& =
    \lim_{\sigma\to0} \int_{-\infty}^\infty
    \E_{{\Pm}^{t,x}_{Y;s,y}}
    \left[\frac{d\Pm_{X;s,y}}{d\Pm_{Y;s,y}}
    \Big|_{[s,t]}
    \rho^{Y}_{t\mid s}(x|y) \right]\phi(x;b,\sigma ) dx 
    \nonumber \\
& = \int_{-\infty}^\infty
    \E_{{\Pm}^{t,x}_{Y;s,y}}
    \left[\frac{d\Pm_{X;s,y}}{d\Pm_{Y;s,y}} \Big|_{[s,t]}
    \rho^{Y}_{t\mid s}(x|y) \right]\delta(x-b) dx \nonumber \\
    & =\E_{\Pm^{t,b}_{Y;s,y}}\left[\frac{d\Pm_{X;s,y}}{d\Pm_{Y;s,y}} \Big|_{[s,t]} \rho^{Y}_{t\mid s}(b|y)\right] =\E_{\Pm^{t,b}_{Y;s,y}}\left[\frac{d\Pm_{X;s,y}}{d\Pm_{Y;s,y}} \Big|_{[s,t]} \right]\rho^{Y}_{t\mid s}(b|y),
\end{align}
where the first equality is due to \eqref{e93},
the fourth to the definition of $\tilde{\Pm}^{t,B}_{Y;s,y}$ and
the fifth one to the characterization of convergence 
in distribution in terms of convergence of expectations of bounded continuous functions, which follows from the assumption that
$\rho^{Y}_{t\mid s}(x|y)$
is bounded and continuous w.r.t. $x$.
\end{proof}

Eq. \eqref{eq:propagator_BCM} leads to the 
definition of the BCMMCE of the propagator
\begin{align}                           \label{eq:propagator_BCMMCE}
    \hat{\rho}^{X}_{t|s} \left(b \mid y \right)  = \frac{d\Pm_{X;s,y}}{d\Pm_{Y;s,y}} 
    \Big|_{[s,t]}
    \hat{\rho}^{Y}_{t\mid s}(b|y),
\end{align}
for which $Y$ is sampled under the bridge measure $\Pm^{t,b}_{Y;s,y}$.

We note that the Dirac delta appearing in the first formula
Eq.~\eqref{eq:propagator_change_of_measure} disappears, so that the
discretization required by the CMCE in Eq.~\eqref{eq:MC_estimator_prop} 
disappears with the BCMMCE.
At first sight, Theorem \ref{tpropag} and 
the resuting BCMMCE in Eq. \eqref {eq:propagator_BCMMCE}
do not seem particularly useful, 
since computing the propagator of the process $X$ 
requires knowledge of the propagator of the process $Y$. However, 
this method gains relevance whenever $Y$ has a known or easy to evaluate propagator. 
Suitable choices are obtained by any process Gaussian process $Y$,
thus with Gaussian propagator, for which bridges can be sampled exactly; see ~\cite{Majumdar2015,aguilar_sampling_2022}. 
Precise suggestions are given
in Sec.
\ref{sec:algorithms}, which provides two generation algorithms:
the first one with Wiener bridges, which is a special
case of the second one with
more general bridges.

An obvious consequence of 
Theorem~\ref{tpropag}
is the following optimality result on the change-of-measure.
\begin{corollary}[Optimality of BCMMCE]\label{cor:optimallity}
Among all importance sampling estimators  
of the propagator 
$\rho_{t|s}^X \left(x \mid y \right)$, 
one with minimal zero variance
is the BCMMCE in Eq. \eqref{eq:propagator_BCMMCE},
whenever $\rho_{t|s}^Y \left(x \mid y \right)$ is available.
In particular, it is the BCMMCE with bridge sampling measure 
$\Pm_{X;s,y}^{t,x}$, which is defined in analogy with $\Pm_{Y;s,y}^{t,x}$
in Eq. \eqref{eq:bridge_meas}.
\end{corollary}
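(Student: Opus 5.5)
The plan is to specialize Theorem~\ref{tpropag} to the choice $Y=X$, i.e.\ $\tilde A = A$, and check that the resulting BCMMCE is deterministic. Variance is nonnegative, so any estimator with zero variance is automatically of minimal variance among importance sampling estimators. The work therefore reduces to exhibiting one BCMMCE whose single-path value equals $\rho^X_{t|s}(x\mid y)$ almost surely, and to noting what this needs.

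First, with $\tilde A=A$, the integrand in Eq.~\eqref{eq:RN_der_2} vanishes identically. Both the stochastic integral against $dY_u$ and the Lebesgue integral are then zero, so $l^{X\mid X}_{s,t}=0$. Eq.~\eqref{eq:RN_der} gives
\begin{equation*}
\frac{d\Pm_{X;s,y}}{d\Pm_{X;s,y}}\Big|_{[s,t]}=1, \qquad \Pm_{X;s,y}\text{-a.s.}
\end{equation*}
This also holds a.s.\ under the bridge measure $\Pm^{t,x}_{X;s,y}$, since the bridge measure is absolutely continuous w.r.t.\ $\Pm_{X;s,y}$, as observed before Theorem~\ref{th:RN_Q_Q^B_rand}. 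Second, substituting into the estimator \eqref{eq:propagator_BCMMCE} with the sampling measure $\Pm^{t,x}_{X;s,y}$, each sampled bridge path returns exactly $1\cdot\rho^X_{t|s}(x\mid y)$. Assuming the auxiliary propagator $\rho^Y_{t|s}=\rho^X_{t|s}$ is available, as the statement requires, it is inserted exactly in place of $\hat\rho^Y$. Third, Theorem~\ref{tpropag} with $Y=X$ confirms unbiasedness: the expectation equals the target, and since the estimator is a.s.\ constant, its variance is $0$. The sample mean over $n$ paths inherits this.

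I expect the main obstacle to be making precise what \emph{among all importance sampling estimators} means. I would take the class to be estimators of the form \eqref{eq:propagator_BCMMCE} over admissible auxiliary diffusions $Y$ sharing the volatility $B$. Together with the general fact that an unbiased estimator has variance $\ge 0$, this shows the minimum $0$ is attained. I would also verify the hypotheses of Theorem~\ref{tpropag} in this case, namely that $\rho^X_{t|s}(\cdot\mid y)$ is bounded and continuous. I would state this as an assumption inherited from the theorem. The sentence \emph{whenever $\rho^Y$ is available} covers the circularity that the optimal choice needs the unknown target propagator, as is usual for zero-variance importance sampling.
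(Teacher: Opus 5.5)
Your proposal is correct and matches the paper's proof: you set $Y=X$ so that $d\Pm_{X;s,y}/d\Pm_{Y;s,y}|_{[s,t]}=1$, which makes the BCMMCE under $\Pm^{t,x}_{X;s,y}$ a.s.\ equal to $\rho^X_{t|s}(x\mid y)$, and you conclude minimality from the nonnegativity of variance. One small remark: you should not rest the a.s.\ identity under the bridge measure on absolute continuity of $\Pm^{t,x}_{X;s,y}$ with respect to $\Pm_{X;s,y}$ on $[s,t]$, which fails because $\{\omega(t)=x\}$ has bridge probability one but unconditional probability zero; you do not need it anyway, since with $\tilde A=A$ the integrands vanish identically and the exponent is zero pathwise.
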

\begin{proof}
    Using $\Pm_{Y;s,y}=\Pm_{X;s,y}$ in Eqs.~\eqref{eq:propagator_BCM} 
    or~\eqref{eq:propagator_BCMMCE}
    yields an estimator with zero variance, since 
    $d\Pm_{X;s,y}/d\Pm_{Y;s,y}\mid_{[s,t]}=
    d\Pm_{X;s,y}/d\Pm_{X;s,y}\mid_{[s,t]}=1$. 
    Because any other minimal variance change-of-measure has variance 
    zero or more, 
    an optimal change-of-measure for estimating $\rho^X_{t\mid s}(x\mid y)$ is
    obtained from the bridge measure $\Pm_{X;s,y}^{t,x}$. 
\end{proof}

However, this optimal BCMMCE is irrelevant for the practice, because the argument is circular: evaluating Eq.~\eqref{eq:propagator_BCMMCE} with the optimal choice would require the very propagator we set out to estimate. Moreover,
generating bridges 
from $\Pm_{X;s,y}^{t,x}$  would also 
require knowledge of the propagator of the process $X$
(cf. ~\cite{Majumdar2015}).
Nevertheless, these results provide useful intuitive guidance for choosing the bridge sampling measure, as suggested in \cite{Asmussen2007}: selecting it as similar as possible to the one of the target process should lead 
to variance reduction

\section{Discretization of path integrals}\label{sec:discretization_PI}

Excepting few limited cases, computing the likelihood ratio
of the BCMMCE 
in \eqref{eq:propagator_BCMMCE}
requires the integration of the full path
$Y_{[s,t]}$, as seen
in Eq.~\eqref{eq:RN_der_2}. 
But the full path on the continuum is not accessible 
by simulation and only discretized versions,
usually referred to as discrete skeletons
(cf. p. 274 of ~\cite{Asmussen2007})
are available.
This section provides two 
discrete approximations to the integral in Eq.~\eqref{eq:RN_der_2}. 

The first solution is the direct application of the left-point
approximation to the integral.
With the homogeneous time partition $s=t_1<t_2<\dots <t_{m-1}
<t_m=t$, with 
mesh 
$\Delta t = t_{i+1}-t_i
=(t-s)/(m-1)$, for $i=1,\dots,m-1$, the skeleton of $Y_{[s,t]}$
is given by 
$y_i = Y_{t_i}(w_{[s,t]})$,
for $i=1,\dots,m$; $m \ge 2$.
This yields the following left-point approximation to Eq.~\eqref{eq:RN_der_2}, 
\begin{equation}\label{eq:RN_discretization}
    \tilde{l}_{s,t,\Delta t}^{X \mid Y}=\frac{1}{t-s}
     \sum_{i=1}^{m-1}  \frac{A\left(y_i\right)-\tilde{A}\left(y_i\right)}{B^2\left(y_i\right)}\,(y_{i+1}-y_i) - \frac{1}{2(m-1)} \sum_{i=1}^{m-1} \frac{A^2\left(y_i\right)-\tilde{A}^2\left(y_i\right)}{B^2\left(y_i\right)},
\end{equation}
provided $B(y_i) \neq 0$, for $i=1,\dots,m-1$, as
suggested by \cite{PRX_Aguilar_2026}. With this, 
the estimation of
$d\Pm_{X;s,y} / d\mathbb{P}_{Y;s,y} =  L^{X \mid Y}(Y_{[s,t]})$, given in
Eq.~\eqref{eq:RN_der}, becomes
\begin{equation}\label{eq:RN_discretization_V2}
    \tilde{L}_{\Delta t}^{X \mid Y}(\{y_i\}_{i=1,\dots,m}) = e^{(t-s)\tilde{l}_{s,t,\Delta t}}.
\end{equation}
Necessarily, approximating $L(Y_{[s,t]})$ 
in Eq.~\eqref{eq:propagator_BCM}
by $\tilde{L}_{\Delta t}(\{y_i\}_{i=1,\dots,m})$ 
introduces bias errors that tend to zero as $\Delta t\to 0$. Of course, 
this homogeneous discretization is not the only 
possibility
and the magnitude of the errors at finite $\Delta t$ 
depends on its choice. 

An specific discretization scheme can be obtained as follows. Define 
\begin{equation*}
    r(y)=\frac{A(y)-\tilde{A}(y)}{B^2(y)} \quad 
    \text{and} \quad q(y)=\frac{A^2\left(y\right)-\tilde{A}^2\left(y\right)}
    {B^2\left(y\right)}+r'(y)B^2(y).
\end{equation*}                                                               It\^o's lemma yields
\begin{equation*}
    d\left(\int^{Y_u}r(y)dy\right) = r(Y_u)dY_u+\frac{1}{2}r'(Y_u)B^2(Y_u)du.
\end{equation*}
This allows us to rewrite Eq.~\eqref{eq:RN_der_2} as
\begin{equation}\label{eq:RN_der_3}
     l_{s,t}^{X \mid Y} = \frac{1}{t-s}
     \left[
     \int_{Y_s}^{Y_t}  r(y)\,dy- \frac{1}{2} \int_s^tq\left(Y_u\right)du \right].
\end{equation}
Since Eq.~\eqref{eq:RN_der_3} contains only Riemann integrals, its discretization has the several advantages: The first integral is deterministic,since it only depends on the extremal points of the bridge. Moreover, in many cases this first integral can be evaluated exactly, hence it does not introduce 
neither a statistical nor a bias errors. 
The second integral in Eq.~\eqref{eq:RN_der_3} 
can be discretized using the trapezoidal rule, which has better convergence than the left-point approximation in Eq.~\eqref{eq:RN_discretization}. The resulting discretization scheme reads
\begin{equation}\label{eq:RN_discretization_V3}
    \hat{l}_{s,t,\Delta t}^{X \mid Y}=\frac{1}{t-s}
     \int_{y_1}^{y_m}  r(y)\,dy- \frac{1}{4(m-1)} \sum_{i=1}^{m-1} \left[q(y_{i+1})+q(y_{i})\right],
\end{equation}
Hence an alternative estimator to Eqs.
\eqref{eq:RN_discretization}
and \eqref{eq:RN_discretization_V2},
for the Radon-Nikodym derivative
in Eq.~\eqref{eq:propagator_BCM},
is given by Eq. \eqref{eq:RN_discretization_V3} and
\begin{equation}\label{eq:RN_discretization_V4}
    \hat{L}_{\Delta t}^{X \mid Y}(\{y_i\}_{i=1,\dots,m}) = e^{(t-s)\hat{l}^{X \mid Y}_{s,t,\Delta t}}.
\end{equation}

\section{Algorithms of
Bridge change-of-measure Monte Carlo propagator estimation} \label{sec:algorithms}

Theorem~\ref{tpropag} yields the algorithms presented in this section. As noted after Theorem~\ref{tpropag}, the BCMMC technique is particularly well suited to sampling processes $Y$ whose bridges are easy to simulate and whose propagators are known. In practice, however, requiring $Y$ to share the volatility function $B$ of the target process $X$ is often too restrictive. We therefore introduce two auxiliary processes $K$ and $H$ on $(C_T, \mathcal{B}(C_T))$.
The process $K$ has an arbitrary volatility function $\beta$. 
The bridges of $K$ are easy to sample.
The process
$H$ shares the volatility function of $K$, by construction.
Precisely,
$H_t = h(X_t)$, where the transformation $h$ is chosen so that $H$ has volatility function $\beta$.
We then proceed in three steps: (i) sample bridges of $K$;
(ii) use these bridges to estimate the propagator of $H$; (iii) transform this estimator into the propagator estimator of $X$. 

For didactic reasons, Sec.~\ref{sec:algorithm} first treats the case $K = W$, the Wiener process, for which $\beta \equiv 1$ and $h$ is the classical Lamperti transform. Sec.~\ref{sec:algorithm_general} then extends the procedure to a general sampling process $K$ with volatility function $\beta$, for which $h$ is a generalized Lamperti transform.
 
\subsection{Estimation algorithm with Wiener bridges}\label{sec:algorithm}

A basic choice of sampling process $K$ for the bridge construction
is the Wiener one $K=W$, whose propagator takes the 
simple form
\begin{equation}                                            \label{eq:wienprop}
\rho^{W}_{t\mid s}(x|y)=\phi(x;y,t-s),
\end{equation}
$\phi$ being the Gaussian density in Eq. ~\eqref{eq:gauss}. 
The corresponding Wiener bridge 
with extremal points $(s,a)$ and $(t,b)$
admits 
the SDE
\begin{equation}\label{eq:W-bridge}
d\overline{W}_u=-\frac{\overline{W}_u-b}{t-u}\,du+dW_u,
\quad 0 \le s < u < t \le T,\; a,b \in \R,
\end{equation}
with $\overline{W}_s=a$;
cf. \cite{klebaner_introduction_2012} p. 133. Importantly, since the Wiener bridge $\overline{W}$ remains Gaussian, it can be sampled exactly: see~\cite{Majumdar2015,aguilar_sampling_2022,PRX_Aguilar_2026}.
In principle, one could only make direct use of such Wiener bridges for the 
BCMMC technique in conjunction with any process $X$ that is additive (in 
the sense of constant volatility) and 
with unit volatility $B(x)=1$, $\forall x \in \R$, 
which is the one of $W$. Should the volatility function $B$ differ,
then the likelihood ratio could not be obtained through 
Girsanov's formula in Eqs.~\eqref{eq:RN_der} and \eqref{eq:RN_der_2}.
In the context of our one-dimensional processes, this restriction can be readily circumvented through the Lamperti transform; cf. e.g. p. 77 of ~\cite{pavliotis_stochastic_2014}. It transforms the target process $X$ of Eq.~\eqref{eq:general_SDE} to some additive process
$H_t= h(X_t)$, for some Borel function $h$.
This transform is any primitive of 
$1/B$, 
\begin{equation}\label{eq:Lamperti_transformation}
    h(x)=\int^x \frac{ds}{B(s)},
\end{equation}
whenever it exists.
We require $h$ injective, twice
differentiable and with continuous second derivative $h''(x)=-B'(x)/B^2(x)$.
These conditions hold iff 
$B$ has positive range and continuous first derivative
$B'$. These conditions are fulfilled by
all common diffusions.
It follows from Itô’s formula that the dynamics of \( H \) follows the SDE
\begin{equation*}
    dH_t=A_h(H_t)\,dt+dW_t,
    \; 0 \le t \le T,
\end{equation*}
where the drift function \( A_h \) takes the form
\begin{equation}\label{eq:transformed_drift}
    A_h(y)=h'\!\left(h^{(-1)}(y)\right)A\!\left(h^{(-1)}(y)\right)
    +\frac{1}{2} B^2\!\left(h^{(-1)}(y)\right) h''\!\left(h^{(-1)}(y)\right).
\end{equation}
Then, the propagators of 
the process $X$ and of its Lamperti transform $H$, 
which is denoted $\rho^H_{t\mid s}(x|y)$,
satisfy the propagator change-of-variable formula
\begin{equation}\label{eq:relation_between_propagators}
    \rho^{X}_{t\mid s}(x|y)=\rho^{H}_{t\mid s}\left(h(x)|h(y)\right) \; |h'\left(x\right)|.
\end{equation}
The BCMMC method starts by estimating $\rho^{H}_{t\mid s}\left(h(x)|h(y)\right)$ with the change-of-measure formulae of 
Sec.~\ref{sec:BCMMC}, 
with
the measure $\Pm_H$ replacing $\Pm_X$ and
the Wiener measure $\Pm$ replacing $\Pm_Y$.
First, Girsanov's formulae in Eqs.~\eqref{eq:RN_der}
and~\eqref{eq:RN_der_2}
with $A=A_h$, $\tilde{A}=0$ and $B=1$
give
\begin{equation}                                \label{eq_girs_H}
   L^{H \mid W}(W_{[s,t]}) = \frac{d\Pm_{H;s,y}}{d\Pm_{s,y}} 
   \Big|_{[s,t]}
   = \exp\left[
    \int_s^t  A_h\left(W_u\right)dW_u - \frac{1}{2} \int_s^t A_h^2\left(W_u\right)du\right],
\end{equation}
where $d\Pm_{H;s,y}$ and $d\Pm_{s,y}$ are
defined in analogy with Eq. \eqref{e20}.
This is used with
the change-of variable formula in 
Eq.~\eqref{eq:propagator_BCM} and the Wiener propagator formula~\eqref{eq:wienprop}, leading to
\begin{equation}\label{eq:propagator_BCM_V2}
    \rho^{H}_{t|s} \left(x \mid y \right) =\E_{\Pm^{t,x}_{s,y}}
    \left[\frac{d\Pm_{H;s,y}}{d\Pm_{s,y}} \Big|_{[s,t]}
    \right]
    \rho^{W}_{t\mid s}(x|y) = \E_{\Pm^{t,x}_{s,y}}
    \left[\frac{d\Pm_{H;s,y}}{d\Pm_{s,y}} \Big|_{[s,t]}
    \right]
    \phi(x;y,t-s).
\end{equation}
Numerical methods have only access to stochastic skeletons $\{\overline{w}_i\}_{i=1\dots,m}$ of $\overline{W}_{[s,t]}$, 
with $\overline{w}_i = \overline{W}_{t_i}(\omega_{[s,t_i]})$ and $t_i = s+ (i-1) \Delta t$, 
for $i=1, \ldots, m$.
We apply the left-point discretization formulae of Sec.~\ref{sec:discretization_PI} to evaluate the path integrals over stochastic skeletons. Thus,
Eqs.~\eqref{eq:RN_discretization} 
and~\eqref{eq:RN_discretization_V2} lead to
\begin{equation}\label{eq:RN_discretization_H}
    \tilde{l}^{H \mid {W}}_{s,t,\Delta t}=\frac{1}{t-s}
     \left( 
     \sum_{i=1}^{m-1} A_h\left(\overline{w}_i\right)\,(\overline{w}_{i+1}-\overline{w}_i) 
     - \frac{1}{2(m-1)} \sum_{i=1}^{m-1} A_h^2\left(\overline{w}_i\right) \right),
\end{equation}
and
\begin{equation}\label{eq:RN_discretization_H_V2}
    \tilde{L}^{H\mid W}_{\Delta t}(\{\overline{w}_i\}_{i=1,\dots,m}) = e^{(t-s)\tilde{l}^{H\mid W}_{s,t,\Delta t}}.
\end{equation}
Finaly, one applies Eq. \eqref{eq:relation_between_propagators}.

These developments lead to
Algorithm~\ref{al:ACM} for BCMMCE of
of propagators of one-dimensional processes using the Wiener bridge.
\begin{algorithm}[H]
  \caption{ BCMMCE of $\rho^X_{t\mid s}(x|y)$ for $X$ solving Eq.~\eqref{eq:general_SDE} with Wiener bridge $\overline{W}$
  solving Eq.~\eqref{eq:W-bridge}}
  \label{al:ACM}
   \begin{algorithmic}[1]
    \State Perform the Lamperti transform on data, $\tilde{x}=h(x)$, and $\tilde{y}=h(y)$, where $h$ is given by Eq.~\eqref{eq:Lamperti_transformation}.
    \State Generate $\overline{n}$ Wiener bridges through Eq.~\eqref{eq:W-bridge} 
    with discretization $\overline{\Delta t} = (t-s)/m$, $m\in\mathbb{N}^{\ge3}$, connecting the points ($s,\tilde{y}$) and ($t,\tilde{x}$). These 
    generated
    sample paths are the skeletons, $\{\overline{w}_i^{(j)}\}_{i=1,\dots,m}$, $j=1,\dots,\overline{n}$, 
    that will be used to evaluate Eq.~\eqref{eq:RN_discretization_H}. 
    \State Build the transformed drift function $A_h$ through Eq.~\eqref{eq:transformed_drift}, which itself requires 
    the functions
    $h'$, $h''$, and $h^{(-1)}$.
    \State Using the bridge skeletons and the transformed drift function 
    from Steps 2 and 3, 
    evaluate the estimator
    $\tilde{L}^{H\mid W}_{\overline{\Delta t}}(\{\overline{w}_i\}_{i=1,\dots,m})$
    of the likelihood ratio at each one of the $\overline{n}$ generated bridges, according to Eqs.~\eqref{eq:RN_discretization_H} and~\eqref{eq:RN_discretization_H_V2} with $\Delta t=\overline{\Delta t}$, 
    yielding
    $
    \left\{\tilde{L}^{^{H\mid W}(j)}_{\overline{\Delta t}}\right\}_{j=1,\ldots,\overline{n}}$.
    \State Estimate the propagator of the transformed process using Eq.~\eqref{eq:propagator_BCM_V2}:
    \begin{equation}
        \hat{\rho}^{H}_{t\mid s}(\tilde{x}|\tilde{y})=\frac{\phi(\tilde{x};\tilde{y},t-s)}{\overline{n}}\sum_{j=1}^{\overline{n}}\tilde{L}_{\overline{\Delta t}}^{^{H\mid W}(j)}.
    \end{equation}
    \State Finally, obtain the BCMMCE of
    the propagator of the target process $X$ by 
    the change-of-variable formula Eq.~\eqref{eq:relation_between_propagators}: $$\hat{\rho}_{t|s;\overline{n},\overline{\Delta t}}^{X \; \text{BCM}} (x|y) =\hat{\rho}^{H}_{t\mid s}(\tilde{x}|\tilde{y}) \,|h'(x)|.$$ 
   \end{algorithmic}
\end{algorithm}

We observe that when the target process $X$ 
 shares the same noise function as the bridge process $\overline{W}$, 
 i.e. \(B(x) = 1\), Steps 1 and 6 are superfluous and can be omitted: in this case 
 $h(x) = x$ yielding $H=X$, $\tilde{x}=x$, $\tilde{y}=y$, and $\hat{\rho}^{H}_{t\mid s}(\tilde{x}|\tilde{y})=\hat{\rho}^{X}_{t\mid s}(\tilde{x}|\tilde{y})$. 

The estimator obtained via Algorithm~\ref{al:ACM} is affected by two main sources of error. The first is statistical, governed by the number of 
generated bridge sample paths \( \overline{n} \). The second is a systematic bias arising from the fact that the generated trajectories are discrete-time skeletons 
of the underlying continuous process. 
Therefore they depend on the temporal discretization unit \( \Delta t \), as discussed in Sec.~\ref{sec:discretization_PI}. Nevertheless, both sources of error are under control: by increasing \( \overline{n} \) and refining the time step \( \overline{\Delta t} \), the overall error can be made arbitrarily small, at the cost of increased computational effort. 

\subsection{Estimation algorithm with arbitrary  bridges}\label{sec:algorithm_general}

The estimation of propagators via a Wiener bridge change-of-measure can be readily generalized to arbitrary bridges. The procedure adheres to the framework presented in Sec.~\ref{sec:algorithm}. The first step consists in transforming the process $X$ to some process denoted $H$ 
so that the noise function $\beta(x)$ 
coincides with the one of the new process $K$
on $(C_T, {\cal B}(C_T))$, which is used to create the 
bridge and which has a known
propagator. The process underlying the bridge is the strong solution of the SDE
\begin{equation}                                \label{eq:SDE_for_bridges}
    dK_u = \alpha(K_u)du+\beta(K_u)\,dW_u, \; \quad 0 \le t \le T,
\end{equation}
where $\alpha$ and $\beta$ are continuous functions.
With $\alpha(k)=0$ and 
$\beta (k) = 1$, we have 
$K=W$ as in Sec.~\ref{sec:algorithm}.
Building bridges 
through the points $(s,a)$ and $(t,b)$
from the process $K$ 
of Eq.~\eqref{eq:SDE_for_bridges} is theoretically feasible through $h$-Doob transforms, see e.g.~\cite{jamison_markov_process_of_schrodinger,Asmussen2007,Majumdar2015,aguilar_sampling_2022},
although numerical efficiency is only guaranteed for processes in which the propagator is known or easy to compute (e.g. Gaussian processes or the geometric Brownian motion). 
This leads to the bridge SDE
\begin{equation}                                \label{eq:SDE_for_bridges2}
    d\overline{K}_u = \overline{\alpha}(\overline{K}_u,u)du+
    \overline{\beta}(\overline{K}_u)\,dW_u, \; \quad 0 \le s < u < t \le T,
\end{equation}
where $\overline\alpha(k,u) = \alpha(k) + \beta^2(k)\,\partial_k \log \rho^{K}_{t|u}(b \mid k)$, and $\overline\beta(k)=\beta(k)$.
Eq. \eqref{eq:W-bridge} is the instance of
Eq. \eqref{eq:SDE_for_bridges2} with
$\overline{\alpha}(k,u)= 0+ 1^2 \;
\partial_k \log \phi(k;b,t-u) = 
- (k- b)/(t-u)$ and 
$\overline{\beta} (k) = 1$.

As stated before, one needs to transform the process $X$ of Eq.~\eqref{eq:general_SDE} 
to some other process $H$,
in order to use the general
bridge $\overline{K}$ of Eq.~\eqref{eq:SDE_for_bridges2}. 
We use the generalization of the Lamperti transform 
$H_t = h(X_t)$ 
obtained from
\begin{equation}\label{eq:generalized_lamperti}
    h(x)=\int^x\frac{\beta(s)}{B(s)}ds,
\end{equation}
whenever the integral exists,
with $B$ representing the volatility of $X$, in Eq.~\eqref{eq:general_SDE}.
We require $h$ injective, twice
differentiable and with continuous second derivative 
$h''(x)=\{\beta'(x) B(x) - \beta(s) B'(x)\}/B^2(x)$.
These conditions hold if 
$B$ and $\beta$ have positive range and continuous first derivatives
$B'$ and $\beta'$. 
The Lamperti transform in 
Eq.~\eqref{eq:Lamperti_transformation} is retrieved with $\beta = 1$. 
By It\^o's lemma, the SDE of $H_t=h(X_t)$ reads
\begin{equation}                                \label{newprocess}
    dH_t=\alpha_h(H_t)\,dt+\beta(H_t)\,dW_t,
    \; 0 \le t \le T.
\end{equation}
where
\begin{equation}\label{eq:transformed_drift_F}
    \alpha_h(y)=h'\!\left(h^{(-1)}(y)\right)A\!\left(h^{(-1)}(y)\right)
    +\frac{1}{2} h''\!\left(h^{(-1)}(y)\right) \,
    B^2\!\left(h^{(-1)}(y)\right),
\end{equation}
with  $h' = \beta/B$.
Thus the SDEs in Eqs.~\eqref{eq:SDE_for_bridges} and~\eqref{newprocess} have the common volatility function $\beta$.

The last part of the procedure is analogous to that presented in  Sec.~\ref{sec:algorithm}, differing only by minor modifications in the principal formulae. 
We consider the 
Radon-Nikodym derivative in Eqs.~\eqref{eq:RN_der} and
\eqref{eq:RN_der_2}
with the stochastic process $X$ replaced by $H$ and the process $Y$ 
replaced by $K$. 
Thus $A=\alpha_h$, $\tilde{A}=\alpha$ and $B=\beta$, leading to
\begin{equation}    \label{eq_girs_l}
l^{H \mid K}_{s,t} =
    \int_s^t  \frac{\alpha_h\left(K_u\right)-\alpha\left(K_u\right)}{\beta^2\left(K_u\right)}\,dK_u - \frac{1}{2} \int_s^t \frac{\alpha_h^2\left(K_u\right)-\alpha^2\left(K_u\right)}{\beta^2\left(K_u\right)}\,du
\end{equation}
and
\begin{equation}    \label{eq_girs_F}
L^{H \mid K}(K_{[s,t]}) =
\frac{d \Pm_{H;s,y}}{d \Pm_{K;s,y}} \Big|_{[s,t]} =
     \exp \left\{
    l^{H \mid K}_{s,t}
    \right\}.
\end{equation}
With the change-of-variable formula in 
Eq.~\eqref{eq:propagator_BCM} we obtain the propagator of the 
transformed process,
\begin{equation}\label{eq:propagator_BCM_general}
    \rho^{H}_{t|s} \left(x \mid y \right) = \E_{\Pm^{t,x}_{K;s,y}}
    \left[\frac{d\Pm_{H;s,y}}{d\Pm_{K;s,y}} \Big|_{[s,t]}
    \right]
    \rho^{K}_{t\mid s}(x|y),
\end{equation}
where
$\Pm_{H;s,y}$ and $\Pm_{K;s,y}$ are defined like in
Eq. \eqref{e20}.
We apply the left-point discretization formulae of Sec.~\ref{sec:discretization_PI}, that are
Eqs.~\eqref{eq:RN_discretization} 
and~\eqref{eq:RN_discretization_V2}. 
Denote $\overline{k}_i = \overline{K}_{t_{i}}(\omega_{[0,t_i]})$, for $i=1,\dots,m-1$, 
then the respective discretizations 
of Eqs.~\eqref{eq_girs_l} and \eqref{eq_girs_F}
are given by
\begin{equation}\label{eq:RN_discretization_F}
    \tilde{l}^{H \mid K}_{s,t,\Delta t}=\frac{1}{t-s}
     \left(
     \sum_{i=1}^{m-1}  \frac{\alpha_h\left(\overline{k}_i\right)-\alpha\left(\overline{k}_i\right)}{\beta^2\left(\overline{k}_i\right)}\,(\overline{k}_{i+1}-\overline{k}_i) - \frac{\Delta t}{2} \sum_{i=1}^{m-1} \frac{\alpha_h^2\left(\overline{k}_i\right)-{\alpha}^2\left(\overline{k}_i\right)}{\beta^2\left(\overline{k}_i\right)} \right)
\end{equation}
and
\begin{equation}\label{eq:RN_discretization_F_V2}
    \tilde{L}^{H\mid K}_{\Delta t}(\{\overline{k}_i\}_{i=1,\dots,m}) = e^{(t-s)\tilde{l}^{H\mid K}_{s,t,\Delta t}}.
\end{equation}
Finally, we use the result that
the propagators of the processes $X$ and $H$
are
related as before through the change-of-variable formula
\eqref{eq:relation_between_propagators}.

These formulae lead to Algorithm~\ref{al:ACM_V2} for 
the BCMMC estimator of
of the propagator of a one-dimensional process using a general bridge.
\begin{algorithm}[H]
  \caption{ BCMMCE of $\rho_{t\mid s}^X(x|y)$ for $X$ solving Eq.~\eqref{eq:general_SDE} with general bridges $\overline{K}$ solving Eq.~\eqref{eq:SDE_for_bridges2}}
  \label{al:ACM_V2}
   \begin{algorithmic}[1]
    \State Perform the generalized Lamperti transform 
    $\tilde{x}=h(x)$, and $\tilde{y}=h(y)$, where $h$ is given by Eq.~\eqref{eq:generalized_lamperti}.
    \State Generate $\overline{n}$ bridges obtained through Eq.~\eqref{eq:SDE_for_bridges2} with discretization unit 
    $\overline{\Delta t} = (t-s)/m$, $m\in\mathbb{N}^{\ge3}$, that connect the points ($s,\tilde{y}$) and ($t,\tilde{x}$). These discrete stochastic skeletons 
    $\{\overline{k}_i^{(j)}\}_{i=1,\dots,m}$, $j=1,\dots,\overline{n}$, 
    are used to evaluate Eq.~\eqref{eq:RN_discretization_F}. 
    \State Build the transformed drift function $\alpha_h$ through Eq.~\eqref{eq:transformed_drift_F}, which itself requires 
    the functions
    $h'$, $h''$, and $h^{(-1)}$.
    \State Using the bridges and  transformed drift from Steps 2 and 3, 
    evaluate the estimator
    $\tilde{L}^{H\mid K}_{\Delta t}(\{\overline{k}_i\}_{i=1,\dots,m})$
    of the likelihood ratio at 
    each one of the $\overline{n}$ generated bridges, according to Eqs.~\eqref{eq:RN_discretization_F} and~\eqref{eq:RN_discretization_F_V2}  with $\Delta t=\overline{\Delta t}$, yielding
    $\left\{\hat{L}^{{H\mid K}(j)}_{\overline{\Delta t}}\right\}_{j=1,\dots,\overline{n}}$.
    \State Estimate the propagator of the transformed process using Eq.~\eqref{eq:propagator_BCM_general}:
    \begin{equation*}
        \hat{\rho}^{H}_{t\mid s}(\tilde{x}|\tilde{y})
        =\frac{\rho^{K}_{s,t}(\tilde{x}|\tilde{y})}{\overline{n}}\sum_{j=1}^{\overline{n}}
        \hat{L}_{\overline{\Delta t}}^{H \mid K (j)}.
    \end{equation*}
    \State Obtain the BCMMCE of the propagator of the target process $X$
    by the change-of-variable formula Eq.~\eqref{eq:relation_between_propagators}: $$\hat{\rho}_{t|s;\overline{n},\overline{\Delta t}}^{X \; \text{BCM}} (x|y) 
    =\hat{\rho}^{H}_{t\mid s}(\tilde{x}|\tilde{y}) \,\mid h'(x) \mid.$$
   \end{algorithmic}
\end{algorithm}

Analogously to Algorithm~\ref{al:ACM}, when the target process $X$  
shares the same noise coefficient as the bridge process, that is, when  
$B(x)=\beta(x)$, Steps 1 and 6 become superfluous: in this 
case $h(x)=x$ yielding $H=X$, $\tilde{x}=x$, $\tilde{y}=y$, and 
$\hat{\rho}^{H}_{t\mid s}(\tilde{x}|\tilde{y})=\hat{\rho}^{X}_{t\mid s}({x}
|{y})$. 

Note that, both Algorithms \ref{al:ACM} and 
\ref{al:ACM_V2}
make use of the
the left-point discretization formulae of Sec.~\ref{sec:discretization_PI}, 
namely.
Eqs.~\eqref{eq:RN_discretization} 
and~\eqref{eq:RN_discretization_V2}. 
These two algorithms
can also be direclty adapted to the Riemannian trapezoidal
formulae Eqs.~\eqref{eq:RN_discretization_V3} 
and~\eqref{eq:RN_discretization_V4}.

\section{Parametric approximation to the best-in-class 
bridge measure}                                 \label{sec:Approximation}

Corollary~\ref{cor:optimallity} provides the optimal 
change-of-measure for the BCMMC 
approach: it is 
$\Pm_{X;s,y}^{t,x}$,
the bridge measure of the process $X$ whose
propagator $\rho^X_{t\mid s}(x|y)$ needs to be estimated.
Then, in Sec. \ref{sec:algorithms}, $X$ is transformed to $H$ through the
(generalized) Lamperti transform $h$.
In the practice, 
the bridge measure $\Pm_{H;s,y}^{t,x}$
is not accessible as its construction requires prior knowledge of the
unknown propagator itself.
We therefore pursue a more practical goal: finding the best approximation to the optimal bridge measure within a tractable, parametrized family of bridge measures. Concretely, recall from Sec.~\ref{sec:algorithm_general} that $K$ is 
the unconditional auxiliary process in Eq.~\eqref{eq:SDE_for_bridges}, which is then conditioned to generate the 
bridge process $\overline{K}$ in Eq.~\eqref{eq:SDE_for_bridges2}. 
We now embed $K$ into a larger class parametrized by the vector $\vec{\theta}$ belonging to some open set $\Theta$ of $\R^d$, $d \ge 1$.
It is precisely the drift function in Eq.~\eqref{eq:SDE_for_bridges} that is generalized and
the generalized sampling process $K_{\vec\theta}=K_{\vec\theta,[0,T]}=\{ K_{\vec\theta,t}
\}_{0\le t \le T}$
solves the SDE
\begin{equation}\label{eq:SDE_for_bridges_V2}
    dK_{\vec{\theta},t} = \alpha(K_{\vec{\theta},t};\vec{\theta}\,)dt+\beta(K_{\vec{\theta},t})\,dW_t,
    \quad 0 \le t \le T.
\end{equation}
The objective is to find the parameter $\vec\theta \in \Theta$ that brings 
$\Pm_{K_{\vec\theta};s,y}^{t,x}$
as close as possible to the optimal bridge measure
$\Pm_{H;s,y}^{t,x}$.
According to
Corollary~\ref{cor:optimallity} the unconstrained optimal bridge measure is obtained with original process, $K^*=H$, 
this forcing $L^{H\mid K^*}(K^*_{[s,t]})=1$, namely $l^{H\mid K^*}_{s,t} =0$, according to Eqs.~\eqref{eq_girs_l} and 
\eqref{eq_girs_F}. 
This tells that
$l^{H\mid K_{\vec{\theta}}}_{[s,t]}$ close to 0 indicates 
that the process $K_{\vec{\theta}}$ is a close approximation to
$K^*$. We thus propose the estimator of $\vec{\theta}$
given by
\begin{equation}                                \label{eq_lo_girs_F_optimal}
   \hat{\vec{\theta}} = \arg \min_{\vec{\theta} \in \Theta}  
   \mid l^{H\mid K_{\vec{\theta}}}_{s,t}\mid
\end{equation}
and we
define by $\hat{K} = K_{\hat{\vec\theta}}$ the resulting sampling process.
Clearly, if $K^*=H\in \{K_{\vec\theta}\}_{\vec{\theta}\in\Theta}$, then $\hat{K}=K^*$. Conversely, if $K^*\notin \{K_{\vec\theta}\}_{\vec{\theta}\in\Theta}$, then $\hat{K}$ 
is the best change-of-measure within the family $\{K_{\vec\theta}\}_{\vec{\theta}\in\Theta}$ 
in terms of variance reduction, as 
   $
   \text{var}\left(L^{H\mid K_\theta}\right)\ge\text{var}\left(L^{H\mid \hat{K}}\right)\ge \text{var}\left(L^{H\mid {K^*}}\right)
   = \text{var}\left(L^{H\mid H}\right)
   =0$, $\forall \theta \in \Theta$.
The minimization problem in Eq.~\eqref{eq_lo_girs_F_optimal} inherits the stochastic nature of the process $H$,
making $\vec{\theta}$ a random variable. 
A standard strategy that eases the problem is to tackle a deterministic version of the minimization, replacing 
$l^{H\mid K_{\vec{\theta}}}_{s,t}$ by its expectation in what is called the cross-entropy method; cf.~\cite{de2005tutorial}. This leads to
\begin{equation}                                \label{eq_lo_expected_girs_F_optimal}
   \vec{{\theta}}^\dag = \arg \min_{\vec{\theta} \in \Theta}   
   \E_{\Pm_{K_{\vec\theta};s,\tilde{y}}^{t,\tilde{x}}}
   \left[ \Big|
   l^{H\mid K_{\vec{\theta}}}_{s,t} \Big| \right],
\end{equation}
where $\tilde{y}= h(y)$ and $\tilde{x}=h(x)$, $h$ being the generalized
Lamperti transform in~\eqref{eq:generalized_lamperti}. We can re-express
$l^{H\mid K_{\vec{\theta}}}_{s,t}$
in terms of an integral of an It\^o integral as follows,
\begin{equation*} 
    l^{H\mid K_{\vec{\theta}}}_{s,t}
    =
    \int_s^t  \frac{\alpha_h\left(K_u\right)-\alpha\left(K_u;\vec{\theta}\right)}{\beta\left(K_u\right)}\,dW_u - \frac{1}{2} \int_s^t \frac{\left[\alpha_h\left(K_u\right)-\alpha\left(K_u;\vec{\theta}\right)\right]^2}{\beta^2\left(K_u\right)}\,du.
\end{equation*}
Thus, because It\^o's integrals have null expectation, we can simplify
Eq.~\eqref{eq_lo_expected_girs_F_optimal} to
\begin{equation}                                \label{eq_lo_expected_girs_F_optimal_V2}
   \vec{{\theta}}^\dag = \arg \min_{\vec{\theta} \in \Theta}  \E_{\Pm_{K_{\vec\theta};s,\tilde{y}}^{t,\tilde{x}}} \left[\int_s^t \frac{\left\{\alpha_h\left(K_u\right)-\alpha\left(K_u;\vec{\theta}\right)\right\}^2}{\beta^2\left(K_u\right)}\,du\right].
\end{equation}

Solving Eq.~\eqref{eq_lo_expected_girs_F_optimal_V2}  might still be cumbersome in the general case, hence we consider the special case of two-parameter families, i.e. $d=2$. 
Given that the integrand
in Eq.~\eqref{eq_lo_expected_girs_F_optimal_V2}
takes fixed values at the boundary points $s$ and $t$ under 
the bridge measure (of the expectation), we can easily approximate the 
integral  by the trapezoidal rule. Consequently, we simplify the 
optimization problem to the 
minimization, w.r.t. $\vec{\theta}$, of the
square bracket at $K_u = \tilde{y}$ plus
that bracket at $K_u = \tilde{x}$. We are lead to
solve w.r.t. $\vec{\theta}$ the the system of equations
\begin{equation}\label{eq:eq_theta}
\begin{cases}
    \alpha(\tilde{x};\vec{\theta})=\alpha_h(\tilde{x}),\\
    \alpha(\tilde{y};\vec{\theta})=\alpha_h(\tilde{y}),
\end{cases}
\end{equation}
which fully characterizes
$\vec{\theta}^\ddag$ when $d=2$ and $\tilde{x} \neq \tilde{y}$.
A key advantage of this approximation is that Eq.~\eqref{eq:eq_theta} admits an analytical solution in many useful cases. For example, when using OU bridges, i.e. Eq.~\eqref{eq:SDE_for_bridges} 
with $\alpha(x)=\overline{k}(\overline{\mu}-x)$ and $\beta(x)=\overline{D}$,
with $\overline{k},\overline{D}\in \mathbb{R}^{>0}$ and $\overline{\mu}\in\mathbb{R}$ being constants. Here $\vec{\theta} 
= (\overline{\mu}, \overline{k})$ and a 
solution to Eq.~\eqref{eq:eq_theta} reads
\begin{equation}    \label{eq:best_in_class_values_OU_bridges}
    \overline{k}^\ddag = \frac{\alpha_h(\tilde{x})-\alpha_h(\tilde{y})}{\tilde{x}-\tilde{y}},\quad \overline{\mu}^\ddag=\frac{\alpha_h(\tilde{y})}{\overline{k}^\ddag}+\tilde{y}.
\end{equation}
Since the evaluations of $\alpha_h$, $\tilde{x}$ and $\tilde{y}$ 
are already required in Algorithm~\ref{al:ACM_V2}, we obtain these two parameters of the bridge 
at essentially null computational cost.

\section{Numerical study}                                   \label{VII}

This section presents a numerical study of the accuracy of 
the BCMMC approach with three important target diffusions
and two different bridge processes.
The targets are the OU, the Cox-Ingersoll-Ross (CIR) and the 
stochastic logistic model (SL) processes; all of them defined in Table~\ref{tab:models}. Both the OU, whose formula 
is in Eq. \eqref{e33}, and CIR processes are paradigmatic models of extended use in finance, physics, ecology and have many other interdisciplinary applications, see e.g. 
\cite{Dornic2005,azaele_statistical_2016,George2023}. Their 
propagators are known and given in Appendix~\ref{AP_sec:propagators}. 
Note also that the OU process is a central 
stationary stochastic process and it
is the continuous time version of the autoregressive
time series of order one, known as AR(1).
Stationarity represents a fundamental property of both theoretical and applied relevance~\cite{gattobook22}. 
The SL model is widely used in ecology to describe logistic growth under environmental noise~\cite{Grilli2020}. Its propagator is 
not available in closed-form. This complicates both its theoretical analysis and the statistical inference~\cite{brigatti_growth-rate_2025,PRX_Aguilar_2026}. Thus, it provides non-tivial example for testing the BCMMC with relevant applications. 
\begin{table}[h]
    \centering
    \begin{tabular}{|c|c|c|c|c|}
    \hline
          \shortstack{Target diffusion\\$X$ in Eq.~\eqref{eq:general_SDE}}& \shortstack{Drift function\\$A(x)$} & \shortstack{Volatility function\\$B(x)$} & \shortstack{Domain of\\parameters} & 
          \shortstack{Propagator\\$\rho^{X}_{t\mid s}(x|y)$} in
          Eq. \eqref{eq:def_propagator}
         \\ \hline \hline
         
         Ornstein-Uhlenbeck (OU) & $k(\mu-x)$  & $D$ & $k,D\in \mathbb{R}^{>0}$, $\mu\in\mathbb{R}$ & Closed-form in Eq. \eqref{eq:OU_propagator} \\ \hline

         Cox-Ingersoll-Ross (CIR) & $k(\mu-x)$  & $D\sqrt{x}$ & $\mu,k,D\in \mathbb{R}^{>0}$ & Closed-form in Eq. \eqref{eq:propagator_CIR} \\
\hline
         Stochastic logistic (SL) & $k(\mu-x)x$  & $Dx$ & $\mu,k,D\in \mathbb{R}^{>0}$  & Closed-form not available\\ \hline
    \end{tabular}
    \caption{
    Description of the 3 target diffusions of
    the numerical examples}
    \label{tab:models}
\end{table}
Standard Wiener and OU bridges are considered in this section. The parameters
of the OU bridges are called as in Table \ref{tab:models} but with
the overbar: $\overline{k}$, $\overline{D}$ and $\overline{\mu}$.

In Sec. \ref{sec:OU_w_wiener_Bridges}, Wiener bridges 
are applied to all three OU, CIR and SL target processes.
In Sec. \ref{sec:varying_bridge}, OU bridges 
are used for OU and SL targets.

\subsection{Wiener bridges}\label{sec:OU_w_wiener_Bridges}

The BCMMC technique with Wiener bridges is studied in this section.
We follow the methodology for Wiener bridges of Sec. \ref{sec:algorithm}.

\subsubsection{Ornstein-Uhlenbeck target process}       \label{sec:OUtarget}

We first apply the BCMMC with the Wiener bridges of Eq.~\eqref{eq:W-bridge} to the OU process whose propagator is available:
see Table~\ref{tab:models} for the functions $A(x)$, $B(x)$ and for the
propagator.
 Since both the Wiener and the OU processes are additive (i.e. have constant volatility function), the Lamperti transform in Eq.~\eqref{eq:Lamperti_transformation} is merely
 a constant rescaling of the OU and the functions required for the new drift 
 function $A_h(x)$, in Eq. \eqref{eq:transformed_drift}
 read: $h(x)=x/D$, $h^{(-1)}(x)=Dx$, $h'(x)=1/D$ and $h''(x)=0$. 
We are interested in the relative errors  
    \begin{align}                                           \label{e44}
    \text{RE}=
    \frac{
    \rho^{X}_{t\mid s}(x|y)-\hat{\rho}_{t|s;\overline{n},\overline{\Delta t}}^{X, \; \text{BCM}}(x|y)}{
{\rho^{X}_{t|s}(x|y)}}.
\end{align}
Fig.~\ref{fig:OU_propagator} shows these relative errors,
as obtained from Algorithm~\ref{al:ACM}. The values of the 
parameters and of the bridges are: 
$\mu=D=k=y=t=1$, $s = 0$ and $x = 1.1$
There are in fact
$r = 100$ MC replications under the
same conditions: the central point of each bar plot in 
Fig.~\ref{fig:OU_propagator}, say $\overline{\text{RE}}$,
is the average of the $r$
relative errors and the extremities
of the bar plots provide the $95\%$-level asymptotic normal 
confidence intervals according to
$\overline{\text{RE}} \pm r^{-1/2} 1.96 S$,
$S$ being the standard deviation of the $r$ replications 
of $\overline{\text{RE}}$.
Both axes have 
logarithmic scale and we consider various numbers of
generated bridges $10^1 \le \overline{n} \le 10^5$ and
various values of the 
discretization unit
$10^{-3} \le \overline{\Delta t} \le 10^{-1}$. 
We see that the
largest relative errors are for $\bar{n} = 10^1$, 
remain essentially below $10\%$. This is remarkably small with the
given the minimal MC setup. As expected, the errors decrease as the number of bridges $\overline{n}$ increases and as 
the discretization unit $\overline{\Delta t}$ decreases. 
First, the relative errors tend to decrease proportionally 
to $\overline{n}^{-1/2}$, which is indicated by the dashed line in Fig.~\ref{fig:OU_propagator} and which reflects 
the statistical or sampling contribution to the relative error.
Then, the relative errors reach an asymptotic regime, 
which appears
determined by $\overline{\Delta t}$.
Achieving optimal accuracy requires balancing statistical or sampling 
error (controlled by $\overline{n}$) and discretization bias (governed by $\overline{\Delta t}$). 
For example, for the blue curve with $\overline{\Delta t} = 10^{-2}$ in Fig.~\ref{fig:OU_propagator}, increasing the number of realizations beyond $\overline{n} = 10^4$ does not yield a statistically significant improvement in precision, because the discretization-induced bias already constitutes the dominant contribution to the total relative error.
See also~\cite{aguilar_biased_2024} in this 
context.
\begin{figure}
    \centering
    \includegraphics[width=0.5\linewidth]{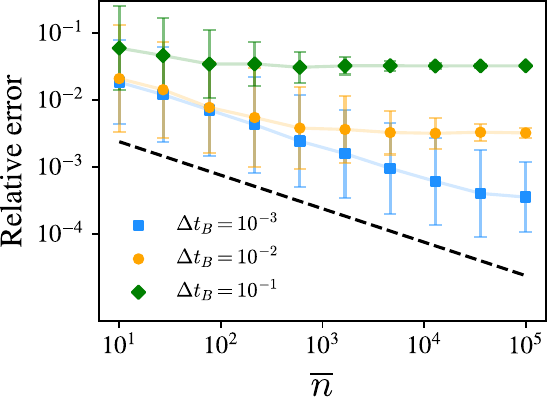}
    \caption{Relative error
    of BCMMCE for the OU target process 
    with the Wiener bridge obtained by
    Algorithm~\ref{al:ACM},
    with varying number of bridge samples ($\overline{n}$) and bridge
    discretization units ($\overline{\Delta t}$).
    Logarithmic coordinates on both axes.
    The dashed line at $\overline{n}^{-1/2}$ indicates the trend of the relative errors in absence of bias. The error bars indicate the $95\%$ confidence interval computed over $r=100$ replications. 
    Parameters values: $\mu=D=k=y=t=1$, $s = 0$ and $x = 1.1$.}
    \label{fig:OU_propagator}
\end{figure}

\subsubsection{Cox-Ingersoll-Ross target process}             \label{sCIR}

This section studies the BCMMCE of
the propagator of the CIR process. 
The functions $A(x)$ and $B(x)$ of Eq.~\eqref{eq:general_SDE}
are given in Table~\ref{tab:models}. The propagator is known and 
given in 
Eq. \eqref{eq:propagator_CIR} of
Appendix~\ref{AP_sec:propagators}.
It is a multiplicative process
in the sense that volatility function $B(x)$ is 
nonconstant. 
Now the Lamperti transform \eqref{eq:Lamperti_transformation} 
is essential for
the use of
Wiener bridges. 
The functions required by 
 Algorithm~\ref{al:ACM} read: $h(x)=2\sqrt{x}/D$, 
 $h^{(-1)}(x)=(Dx/2)^2$,
 $h'(x)=(D\sqrt{x})^{-1}$, $h''(x)=-(2 D x^{3/2})^{-1}$.
The values of the parameters and the bridges are: 
$t=k=1$, $D=0.5$, $\mu = y = 0.3$ and  $s=0$.
Fig.~\ref{fig:CIR} shows the perfect agreement between the 
BCMMCE with
Wiener bridge and the true propagator in Eq.~\eqref{eq:propagator_CIR}:
the density is shown in Fig.~\ref{fig:CIR}.a and the logarithm of
the density in Fig.~\ref{fig:CIR}.b.
The error bars show the $95\%$ confidence intervals computed over 
$r=100$ realizations.
The discretization unit is $\overline{\Delta t} = 0.01$.
Notably, using only $\overline{n}=10$ generated bridges yields highly accurate estimates of the propagator’s tail behavior.
Obtaining such precision with the CMCE in
\eqref{eq:MC_estimator_prop}
would necessarily require substantially more realizations, 
since we are accessing events with very low probability on the tails of the distribution. This last point will be illustrated numerically 
in Sec. \ref{sec:sl_w}, where the BCMMCE and the 
CMC estimator are compared.
\begin{figure}
    \centering
    \includegraphics[width=0.7\linewidth]{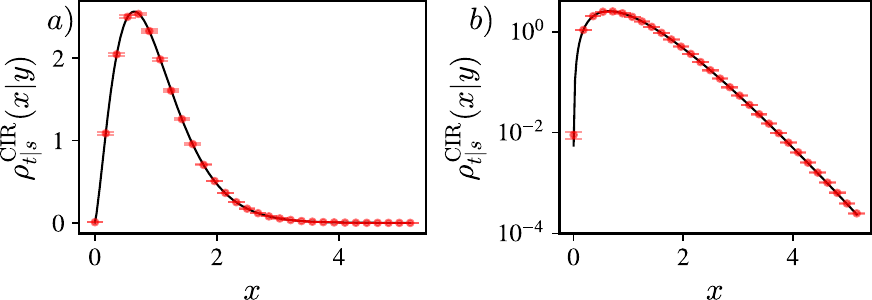}
    \caption{BCMMCE of CIR propagator with Wiener bridges (dots) and 
    exact formula (solid lines). Original scale in Fig. a and
    logarithmic $y$-axis is Fig. b. 
    The BCMMCE is computed with
    $\overline{n}=10$ bridges and discretization unit $\overline{\Delta t} = 0.01$.
    Error bars show $95\%$ confidence interval computed over $r=100$ realizations. Bridge and parameter values: $t=k=1$, $D=0.5$, $\mu = y = 0.3$ and  $s=0$.}
    \label{fig:CIR}
\end{figure}

\subsubsection{Stochastic logistic target process}              \label{sec:sl_w}

We now consider the SL target process
whose propagator is not available in closed-form expression. As before
we refer to
Table~\ref{tab:models} for the functions $A(x)$ and $B(x)$.
The functions required by Algorithm~\ref{al:ACM_V2} 
read in this case: $h(x)=\log(x)/D$, $h'(x)=(Dx)^{-1}$, $h''(x)=-(Dx^2)^{-1}$ and $h^{(-1)}(x)=e^{Dx}$.
Fig.~\ref{fig:SLM} shows the 
BCMMCE with Wiener bridges according to Algorithm~\ref{al:ACM_V2}
as well as the CMCE of Eq.
\eqref{eq:MC_estimator_prop}.
 We see a strong agreement between BCMMCE and CMCE. Notably, the 
 BCMMCE achieves the same level of precision around the mode using three orders of magnitude fewer sample paths than the CMCE. Moreover, it permits accurate estimation in regions of the state space that are 
 essentially
 inaccessible to crude MC, owing to the limited number of generations
 over such regions. 
 Indeed, with $n=10^4$ replications of the CMCE 
 cannot access the regions below $x=0.75$ and above $x=1.25$, while the 
 BCMMC method can access these two tails with only $\overline{n}=10$ realizations per end-point.
\begin{figure}
    \centering
\includegraphics[width=0.5\linewidth]{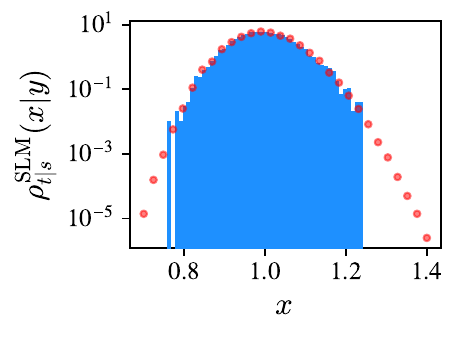}
    \caption{
    BCMMCE with $\overline{n}=10$ generations (dots) and CMCE 
    with $n=10^4$ generations (bars) of SL propagator. The two estimators  exhibit similar fluctuations over the central region but
    only
    the BCMMCE effectively accesses the tails, not the CMCE.
     The $95\%$ confidence intervals for BCMMCE computed with $r=100$ 
     realizations lie within the marker size. 
     Bridge and
     parameters values: $\mu=k=1$, $D=0.1$, $s=0$ and $t=1$.
}
    \label{fig:SLM}
\end{figure}

\subsection{Ornstein-Uhlenbeck bridges}\label{sec:varying_bridge}

We now illustrate the BCMMC approach with general bridges, according to Sec.~\ref{sec:algorithm_general}, focusing on OU bridges. 
Note that although the OU has constant volatility function $\beta(x)$,
just like the Wiener process, we cannot rescale the OU
and apply the methodology for Wiener bridges of Sec. \ref{sec:algorithm},
because it assumes null drift function $\alpha(x)$.
As mentioned, 
the method is particularly efficient 
when the bridge propagator is known. This allows for the exact
evaluation of
the propagator change-of-variable formula
Eq.~\eqref{eq:relation_between_propagators} and also 
for the exact generation of the bridges, according to~\cite{Majumdar2015,aguilar_sampling_2022}. 
Bridges of the OU process are obtained by conditioning the process $K$ in Eq.~\eqref{eq:SDE_for_bridges} with $\alpha(x)=\overline{k}(\overline{\mu}-x)$ and $\beta(x)=\overline{D}$, for given $\overline{k},\overline{D}\in \mathbb{R}^{>0}$ and $\overline{\mu}\in\mathbb{R}$. OU bridges can be generated exactly according to the method of~\cite{Majumdar2015,PRX_Aguilar_2026} and the propagator of $K$ is given in Eq.~\eqref{eq:OU_propagator}, upon 
replacing $k$, $\mu$ and $D$ by $\overline{k}$, $\overline{D}$ and $\overline{\mu}$, respectively. 

\subsubsection{Ornstein-Uhlenbeck target process}

Let us first start with the same problem of
Sec.~\ref{sec:OUtarget}, where the aim is to estimate
the OU propagator, but here with OU bridges. In particular, we focus on a simple, yet illustrative case in which the process $K$, 
used to generate bridges $K$, 
is almost identical to the target process $X$: 
we take $k=\overline{k}$, $D= \overline{D}$ but $\mu\neq \overline{\mu}$. 
Now Algorithm~\ref{al:ACM_V2} simplifies substantially: 
as explained immediately after Algorithm~\ref{al:ACM_V2},
the generalized Lamperti transform is the identity when 
$B(x)=\beta(x)$.

Fig.~\ref{fig:OU_OU_bridges} shows the standard deviation of the 
BCMMCE with OU target and OU bridges, for
varying values of the ratio of bridge over target
means, $\overline{\mu}/\mu$. 
The value of the bridges and of the model parameters are: 
$y = \mu=k=D=1$, $x = 0.1$, 
$s=0$ and $t=1$.
The values of the simulation are:
$\overline{\Delta t} = 0.01$ and $\overline{n}=10^3$.
As claimed by Corollary~\ref{cor:optimallity}, the standard deviation 
tends to zero as the process $K$ approaches 
the target process $X$ (i.e. as $\overline{\mu}/\mu \to 1$).
\begin{figure}
    \centering
\includegraphics[width=0.5\linewidth]{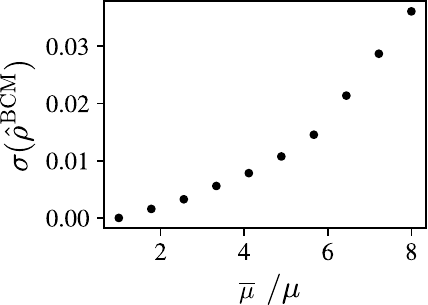}
    \caption{Standard deviation of the BCMMCE of OU propagator with
    OU bridges 
    with $\overline{n}=10^3$ generations and
    discretization unit
    $\overline{\Delta t} = 0.01$
    for varying ratio of target over bridge means
    $\overline{\mu}/\mu$. 
    Target and bridge OU processes differ only through
    their means, $\mu$ and $\overline{\mu}$ respectively. 
    Values of bridges and parameters: 
    $y = \mu=k =\overline{k}  =D =\overline{D} =1$, $x = 0.1$,  
    $s=0$ and $t=1$.
}
    \label{fig:OU_OU_bridges}
\end{figure}

\subsubsection{Stochastic logistic target process}

We now turn to the SL target process
already considered with Wiener bridges 
in Sec.~\ref{sec:sl_w}
but here with OU instead of Wiener bridges.
We remind that the SL propagator is not available.
We consider various OU bridges differing by the
 values of the parameter $\overline{k}$. 
 Algorithm~\ref{al:ACM_V2} requires
 the generalized Lamperti transform that renders 
 the SL process additive: it is already given in 
 Sec.~\ref{sec:sl_w}.
Fig.~\ref{fig:SLM_OU_bridges} shows the mean square error (MSE) between the BCMMCE and a surrogate to the exact, which is the CMCE with 
$n=10^5$ replications, which is  computed across 
parameter values of the bridges $0 \le \overline{k}\le 15$
in Fig.~\ref{fig:SLM_OU_bridges}.a
and across
the domain of propagator arguments $0.7 \le x
\le 1.4$ in Fig.~\ref{fig:SLM_OU_bridges}.b.
The $y$-axis of both graphs are in logarithmic scale.
The MSE is obtained by 
replicating $r=30$ times the same simulation scheme,
with the formula
${\rm MSE} = r^{-1} \sum_{j=1}^{r} \left\{ \hat{\rho}_{t|s;n}^{X \; {\rm BCMMC} \; (j)} (x|y) - \hat{\rho}_{t|s;n}^{X \; {\rm CMC}} (x|y) 
\right\}^2$.
Since the auxiliary process $K$ cannot coincide with the target process $X$
and therefore the BCMMCE with minimal variance cannot be obtained, 
according to Corollary~\ref{cor:optimallity},
$\overline{k}$ can be taken tuning parameter for the minimization of the MSE.
Fig.~\ref{fig:SLM_OU_bridges}.a shows that
 the MSE is minimized for $\overline{k}$ close to $k=3$, which is marked by a 
 dotted line.
 Notably, a poorly chosen value of $\overline{k}$ can perform substantially worse than the driftless Wiener bridge, obtained with 
 $\overline{k}=0$, this by up to three orders of magnitude in this example. 
The values for the bridges
and of the model
parameters are: $y=x_0=\mu=1$, $k=3$, $D=0.3$, $s=0$, $t=0.75$, $\overline{\mu}=-0.05$.
The values of the MC study are
$\overline{\Delta t}=0.01$ and $\overline{n}=1$. There are 
$r=30$ same replications per value of $\overline{k}$ and
the CMC reference is based on $n=10^5$ realizations.
 Matching qualitative features of the target $X$ (e.g. mean-reversion) may 
 not lead to an efficient auxiliary process $K$. What 
 appears more important is the similarity between the
 auxiliary and target drift functions, $\alpha(x)$ and $A(x)$
 respectively, as suggested by Corollary~\ref{cor:optimallity}.
 Fig.~\ref{fig:SLM_OU_bridges}.b compares the BCMMCE obtained 
 with the best- and the worst-performing values of $\overline{k}$, 
 against the reference curve of CMCE, for $0.7 \le x \le 1.4$.
 The best and worst $\overline{k}$ are identified in
 Fig.~\ref{fig:SLM_OU_bridges}.a.
The $y$-axes of the two graphs are in logarithmic scale.
While the BCMMCE with best value of $\overline{k}$ is shown 
by green dots. They reproduce the target propagator accurately across the whole sampled range. The
BCMMCE with worst $\overline{k}$ and shown by red squares. They 
display erratic fluctuations, especially away from the mode of the distribution.
\begin{figure}
    \centering
\includegraphics[width=0.9\linewidth]{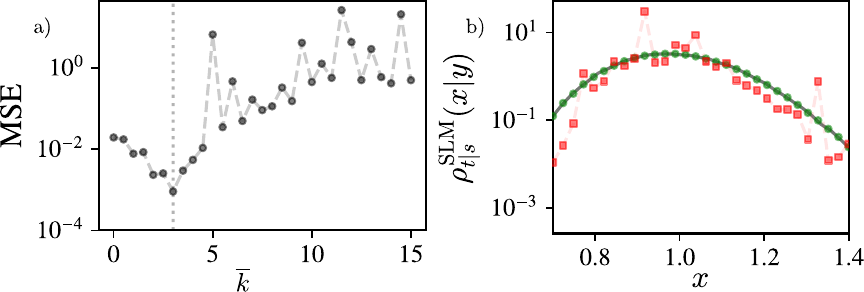}
    \caption{BCMMCE of SL propagator with OU bridges.
    for varying values of the bridge parameter $\overline{k}$
    and propagator argument $x$.
    Fig. a shows the MSE between BCMMCE and CMCE taken as reference, 
    for various values of $\overline{k}$ and
    averaged over $0.9 \le x \le 1.4$. The dotted line at 
    $\overline{k}=k=3$ corresponds to the minimal MSE.
    Fig. b 
    shows the BCMMCE
    using the best-performing (green circles) and worst-performing (red squares) values of $\overline{k}$, taken from from Fig. a, 
    compared with the CMCE reference (black curve). Values for 
    bridges and parameters: $y=x_0=\mu=1$, $k=3$, $D=0.3$, $s=0$, $t=0.75$, $\overline{\mu}=-0.05$.
    Values of the MC scheme:
    $\overline{\Delta t}=0.01$, $\overline{n}=1$, averaged over $r=30$ 
    same replications per $\overline{k}$ and $n=10^5$ CMC replications.}
    \label{fig:SLM_OU_bridges}
\end{figure}

\subsection{Approximation to best-in-class Ornstein-Uhlenbeck bridge} \label{sec:best_in_class_approx}

Sec.~\ref{sec:varying_bridge} illustrates the ability of
parametrized bridge measures to compute propagators and the existence of a
best-in-class parameter choice, which
minimizes variance and thus statistical error of the BCMMCE,
within the bridge family.
We now test the 
analytical approximation presented in Sec.~\ref{sec:Approximation}
for finding such best-in-class bridge.
For this purpose, we consider the CIR and SL propagators analyzed in previous sections and we use OU bridges.

We first compute the BCMMCE for the CIR process using OU bridges, evaluating relative errors over a 
two-dimensional grid of parameters $(\overline{\mu}, \overline{k})$,
which is
centered about the best-in-class approximation
parameters
$( \overline{\mu}^\ddag,  \overline{k}^\ddag )$, given in 
Eq.~\eqref{eq:best_in_class_values_OU_bridges}. 
Because the exact propagator is known, relative errors can be computed.
At each pair $(\overline{\mu}, \overline{k})$
 on the grid, the BCMMCE 
 is evaluated using only $\bar{n}=1$ replication
 and 
 $10^3$ such BCMMCEs are generated.
 Their average allow us to obtain
 the relative error in Eq. \eqref{e44}.
This is done for all pairs
 $(\overline{\mu}, \overline{k})$
 and these pairs are then ordered according to their values of the 
 relative error: 
 rank 0 corresponding to the pair with smallest relative error
 and rank $10^3-1$ to the pair with largest one.
 The grid includes 
 both the parameters of our
 best-in-class approximation
$( \overline{\mu}^\ddag,  \overline{k}^\ddag )$
 and the parameters of the Wiener bridges 
 $(\overline{\mu},0)$. 
 Fig.~\ref{fig:SLM_optimal_OU_bridges}.a shows the relative errors
 w.r.t. to the ranks.
The dashed gray
lines mark the value obtained using Wiener bridges ($\overline{k}=0$), the
dotted red lines mark the value obtained with the best-in-class approximation
in Eq.~\eqref{eq:best_in_class_values_OU_bridges} and the solid blue lines
mark the smallest value found in the grid, namely the one
at rank 0.
The values of the parameters are the following: 
$\mu=1$, $k=1$, $D=0.1$, $X_0=1$, $s=0$,
$t=1/k=1$, $y=1$, $x=\mu+10^{-4}$, 
$\overline{n}=1$, $\overline{\Delta t}=0.01$. The
grid has $31\times31$ values of $(\overline{\mu}, \overline{k})$ 
within
$
[\overline{\mu}^\ddag-20,\,
\overline{\mu}^\ddag+20] \times [\overline{k}^\ddag-3,\,
\overline{k}^\ddag+3]$.
Fig.~\ref{fig:SLM_optimal_OU_bridges}.a shows that the best-in-class approximation substantially improves on the Wiener bridge estimator, which 
 already demonstrates 
 good accuracy in Sec.~\ref{sCIR}. However, it does not attain the smallest error found within the grid. This is consistent with the fact that our 
 solution in Eq. \eqref{eq:best_in_class_values_OU_bridges}
 represents only an easy-to-evaluate approximation to the true 
 best-in-class parameters.
Fig.~\ref{fig:SLM_optimal_OU_bridges}.b shows the results
of the analogue analysis for the
SL propagator.
The values of the parameters are as follows:
$\mu=1$, $k=3$, $D=0.3$, $X_0=1$, $s=0$,
$t=0.75$, $y=1$, 
$x=0.95$, $\overline{n}=1$, $\overline{\Delta t}=0.01$. The grid has
$31\times32$ values of $(\overline{\mu}, \overline{k})$ spanning
$[-0.3, 0.3] \times [0.5, 8.0]$, which is augmented with
$\overline{k}=0$ for the Wiener bridge and with the analytical
approximation $(\overline{\mu}^\ddag, \overline{k}^\ddag)$ itself.
At each pair $(\overline{\mu}, \overline{k})$
 on the grid
 and at the points of augmentation, the BCMMCE 
 is evaluated using only $\bar{n}=1$ replication
 and 
 $10^3$ such BCMMCEs are generated. Then the sample variance these
 values is computed.
The ranking of the
parameter pairs 
$(\overline{\mu}, \overline{k})$ 
is now constructed for the variance instead of
relative error. This is because
the closed-form expression for the propagator,
used to compute the relative error, is not available here, for the SL model.
Instead of the relative error, the variance serves as a proxy
for the statistical error of the BCMMCE. 
The various lines have the same identifications as in
in Fig.~\ref{fig:SLM_optimal_OU_bridges}.a.
As before, 
the best-in-class approximation
in Eq.~\eqref{eq:best_in_class_values_OU_bridges}
clearly outperforms the Wiener bridge BCMMCE, which is itself already
a strong baseline. In this case, however, 
the variance of our
approximation with
$(\overline{\mu}^\ddag, \overline{k}^\ddag)$
coincides with the smallest variance found over the full grid, indicating
that the approximation in
Eq.~\eqref{eq:best_in_class_values_OU_bridges}
is very close to the true best-in-class parameters
for this transition.
In  Fig.~\ref{fig:SLM_optimal_OU_bridges}.b,
the analytical approximation 
$( \overline{\mu}^\ddag,  \overline{k}^\ddag )$
coincides with the 0-rank value, while in
 Fig.~\ref{fig:SLM_optimal_OU_bridges}.a
 it remains above it, though still several orders of magnitude
below the Wiener bridge baseline.

While the shapes of the lines of
Figs.~\ref{fig:SLM_optimal_OU_bridges}.a and ~\ref{fig:SLM_optimal_OU_bridges}.b
do depend on the specific
parameters of the target process, we
consistently observed across different tests that the best-in-class
approximation in Eq.~\eqref{eq:best_in_class_values_OU_bridges}
improves the efficiency of BCMMCE by several orders of
magnitude relative to using Wiener bridges.
\begin{figure}
    \centering
\includegraphics[width=0.9\linewidth]{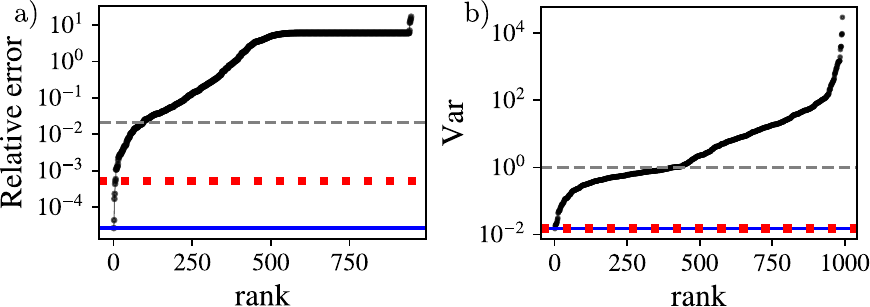}
   \caption{
   Approximation of the best-in-class parameters for BCMMCE of
   CIR and SL processes
   with OU bridges. 
   Fig. a shows the relative error of the BCMMCE with the CIR model
over a grid of bridge parameters
$(\overline{\mu}, \overline{k})$ centered at
$( \overline{\mu}^\ddag,  \overline{k}^\ddag )$, ranked in increasing order
of relative error of BCMMCE: 
rank 0 for smallest relative error and rank $10^3-1$ for largest.
Fig. b displays the variance of the BCMMCE with the SL model over a grid of bridge parameters
$(\overline{\mu}, \overline{k})$,
ranked increasing order of variance of BCMMCE: rank 0 for smallest
variance and rank above $10^3$ for largest.
In Fig. a and b
the dashed gray
lines mark the value obtained using Wiener bridges ($\overline{k}=0$), 
the
dotted red lines mark the value at
$( \overline{\mu}^\ddag,  \overline{k}^\ddag )$ 
and the solid blue lines
mark the smallest (or 0-rank) value found in the sampled grid. 
Parameter values for Fig. a (CIR): $\mu=1$, $k=1$, $D=0.1$, $X_0=1$, $s=0$,
$t=1/k=1.0$, $y=1$, $x=\mu+10^{-4}$, $\overline{n}=1$, $\overline{\Delta t}=0.01$,
grid of $31\times31$ values of $(\overline{\mu}, \overline{k})$ spanning 
$[\overline{\mu}^\ddag-20,\,
\overline{\mu}^\ddag+20] \times [\overline{k}^\ddag-3,\,
\overline{k}^\ddag+3]$, each point with $10^3$ 
independent realizations of the BCMMCE.
Parameters values for Fig. b (SL): $\mu=1$, $k=3$, $D=0.3$, 
$X_0=\mu=1$, $s=0$,
$t=0.75$, $y=1$, $x=0.95$, $\overline{n}=1$, 
$\overline{\Delta t}=0.01$, grid of
$31\times32$ values of $(\overline{\mu}, \overline{k})$ spanning
$[-0.3, 0.3] \times [0.5, 8.0]$, augmented with
$\overline{k}=0$ for the Wiener bridge case and with
$( \overline{\mu}^\ddag,  \overline{k}^\ddag )$, each 
point
with $10^3$ 
independent realizations of the BCMMCE
}
    \label{fig:SLM_optimal_OU_bridges}
\end{figure}

\subsection{Comparing path integral discretizations} \label{sec:compare_PI_discretizations}

Two discretization schemes for the evaluation of a path 
integral appearing in the BCMMCE are
presented in Sec.~\ref{sec:discretization_PI}. This section
compares these two schemes.
It provides the relative errors of the 
BCMMCE under the CIR model and OU bridges,
using the left-point approximation of the Radon-Nikodym derivative of Eqs.~\eqref{eq:RN_discretization} and \eqref{eq:RN_discretization_V2}
(which was used in \cite{aguilar_biased_2024}) and the Riemannian
trapezoidal discretization of Eqs. \eqref{eq:RN_discretization_V3} and
\eqref{eq:RN_discretization_V4}

The values of the parameters 
are the following: $\mu = k = 1$, $D = 0.1$, $X_0=1$, $s = 0$, $y = \mu = 1$, 
$x = \mu + 0.1 = 1.1$ and $\overline{n} = 10^4$ bridges per point.
The parameters of the OU bridge
are equal to the best-in-class approximation 
$( \overline{\mu}^\ddag,  \overline{k}^\ddag )$, as given in 
Eq.~\eqref{eq:best_in_class_values_OU_bridges}. 
we refer to Table~\ref{tab:models} for 
the functions $A(x)$, $B(x)$ and for the
exact propagator, which 
allows for the computation of the
relative errors.
Fig.~\ref{fig:discretization}
shows the relative error of the two BCMMCE of the CIR propagator as 
function of the time window $t$ (given $s=0$).
The black circles mark the left-point discretization of the Radon-Nikodym derivative, as given in 
Eqs.~\eqref{eq:RN_discretization} 
and \eqref{eq:RN_discretization_V2}
(and used in \cite{aguilar_biased_2024}). 
The red squares mark the new Riemannian trapezoidal discretization, as given in Eqs.~\eqref{eq:RN_discretization_V3} and.~\eqref{eq:RN_discretization_V4}.
The bridge discretization step is kept at the fixed small fraction of the time window $\overline{\Delta t} = t/100$, so 
the number of bridge points $\bar{n}$ 
is approximately the same across all sampled values within the window of 
size $t$.

As expected, the Riemannian trapezoidal approximation outperforms the left-point scheme at all windows times $t$.
Fig.~\ref{fig:discretization} shows also 
that the relative errors of both estimators grow steadily w.r.t. the time window $t$. 
This is a direct consequence of discretization bias: since the bridge step is kept at a fixed fraction of the window ($\overline{\Delta t} = t/100$), 
it grows in absolute terms w.r.t. $t$, so each bridge point carries a coarser local approximation and the resulting bias accumulates over the whole path. This growth is, however, of limited practical relevance: as $t$ increases, the propagator $\rho^{\mathrm{CIR}}_{t|s}(x|y)$ converges to the stationary distribution of the process (a Gamma distribution for the CIR model), 
which no longer depends on the conditioning point $y$ and which 
is available in closed-form for all diffusions solving the SDE in  Eq.~\eqref{eq:general_SDE}. In this regime the target quantity is therefore trivial to compute directly, without any need of the BCMMCE.

\begin{figure}
    \centering
    \includegraphics[width=0.5\linewidth]{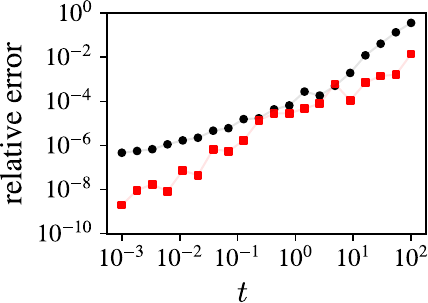}
    \caption{Relative error of the BCMMC estimator of the CIR propagator with
    OU bridges w.r.t. time window $t$. 
     Black circles: left-point discretization of the integral in
     likelihood ratio. Red squares: Riemannian trapezoidal discretization
     of the integral.
  Parameters fixed through the best-in-class approximation
    The bridge discretization step is kept at a fixed fraction of the time window, $\overline{\Delta t} = w/100$, so the number of bridge points is approximately the same across all sampled values of $dt$. Parameters used: $\mu = k = 1$, $D = 0.1$, $s = 0$, $y = x_0 = \mu$, $x = \mu + 0.1$, 
    $\overline{\Delta t} = t/100$,
    $\overline{n} = 10^4$ bridges per point, 
$( \overline{\mu},  \overline{k} )=
( \overline{\mu}^\ddag,  \overline{k}^\ddag )$ given by Eq.~\eqref{eq:best_in_class_values_OU_bridges}.   }
    \label{fig:discretization}
\end{figure}

\section{Discussion}                            \label{sVIII}

The first objective of this work was to place the BCMMC 
method for propagator estimation within a rigorous mathematical formalism.
This objective was achieved through two theorems.
First, Theorem~\ref{th:RN_Q_Q^B_rand} establishes the absolute continuity of 
the measure induced by $Y$ with a fixed point
($\mathbb{P}_{Y;s,a}$)
w.r.t.
the bridge process with a ramdom arrival point 
(${\tilde{\mathbb{P}}}_{Y;s,a}^{t,B}$)
and derives 
the corresponding Radon-Nikodym derivative
($d\mathbb{P}_{Y;s,a}/d{\tilde{\mathbb{P}}}_{Y;s,a}^{t,B} 
\mid_{[s,t]}$).
Second, Theorem \ref{tpropag} provides the bridge change-of-variable
formula linking
the two propagators
 $\rho^{X}_{t|s} \left(b \mid y \right)$  
 and
 $\rho^{Y}_{t\mid s}(b|y)$. With
Theorems~\ref{th:RN_Q_Q^B_rand} and \ref{tpropag} we could
build the mathematical foundation 
Algorithms \ref{al:ACM} and \ref{al:ACM_V2} of BCMMC, that are
the practical contribution of the article.
These algorithms rely also on a third essential mathematical result: the (generalized) Lamperti transform.
The
BCMMC can be applied to an arbitrary one-dimensional diffusion 
$X$ by means of the (generalized) Lamperti $H_t=h(X_t)$. 
The transformed process 
shares the same volatility function of 
the sampling process $K$, whose propagator is known.
This implies that the required absolute
continuity of the measures induced by $H$ 
and $K$, thus providing the likelihood ratio
of the respective bridges 
$d \Pm_{H;s,y}  / d \Pm_{K;s,y} \mid_{[s,t]}$.
To broaden the applicability of the BCMMC technique, we also derived an approximate optimal choice of bridge parameters within a chosen 
parametric
class, optimality being in the sense of
variance reduction. We developed a new specific discretization scheme for the path integrals appearing in the BCMMCE. These improvements reduce both the statistical and bias errors, thereby increasing the computational efficiency of the method. Finally, we presented several numerical illustrations that demonstrate the usefulness of the approach.  A Python implementation of the method, powered with \emph{NJIT} for fast computations is shared at Ref.~\cite{Aguilar2026BCMcode}. Such code is easy to adapt to solve arbitrary one-dimension diffusions.

Further contributions are the following. First,
we identified two key advantages of the BCMMC over the CMC for propagator estimation. First, BCMMC removes the Dirac delta from the expectation defining the propagator, upon which CMC is based, thereby avoiding the need for the state space discretization of CMC.
Second, Corollary~\ref{cor:optimallity} shows that a measure that minimizes 
the variance of the importance sampling propagator estimator is a bridge measure, suggesting that bridge processes are particularly efficient for propagator estimation. Intuitively, BCMMC can achieve greater precision than CMC when estimating probability densities of specific transitions 
(from one specific point 
$(s,y)$ to some other one $(t,x)$, in $[0,T]\times \R$)
because it concentrates the sampling effort on the target transition, whereas CMC distributes samples across the entire time-state space. Consequently, BCMMC and CMC may be viewed as complementary approaches suited to different tasks: while BCMMC is particularly appropriate for estimating propagators associated with specific transitions, typically being rare events (see \cite{aguilar_sampling_2022,Aguilar_Gatto_2024}), CMC is preferable when one seeks to evaluate the propagator over broader regions of the time-state space, especially those surrounding the most likely states.
This approach is also readily applicable to processes in which the functions $A$ and $B$ depend explicitly on time, since Lamperti transforms can be further generalized to this setting (cf. e.g.~\cite{Di_Bello_2025}). However, extending this approach to multidimensional processes is more restrictive. When the volatility is matrix-valued, there is no general analogue of the one-dimensional Lamperti transform that can map an arbitrary diffusion matrix to a prescribed constant form. Consequently, the applicability of BCMMC to multidimensional processes is limited to specific classes of processes for which a suitable auxiliary bridge process can be efficiently sampled and whose measure is absolutely continuous with respect to that of the multidimensional target process. A detailed assessment of the limitations and possible extensions of the method in the multidimensional setting is an important direction for future research.

We developed an improvement on both optimality (that reduces statistical errors of the method) and discretization schemes (that reduces bias). Nevertheless, an exhaustive analysis of these two refinements are outside the scope of this work and provide another important direction for future works. For example, a simple enhancement of the discretization can be achieved by using the general trapezoidal rule constructed on a non-uniform integration lattice, whose density is adapted to the local smoothness properties of the function $q$ to be integrated.

\section{Code availability}

Codes used to generate the figures of this manuscript are available at Ref.~\cite{Aguilar2026BCMcode}.

\section{Acknowledgements}
 J.A. acknowledges financial support from Project UNCOMMON  ``Unveiling the ecological dynamics of taxonomic and functional composition of microbial communities'', funded by the Italian Ministry of  Research (MUR), Future in Science - FIS Programme (CUP J53C23002290001).

\section{Appendix}
\subsection{Two propagators}\label{AP_sec:propagators}

For the sake of self-containment, we report here the analytical 
formulae of the OU and CIR propagators used throughout this work, which serve either to compute relative errors or to be employed in the BCMMC formulae. 
The OU and CIR processes are defined through the drift and volatility functions of Table~\ref{tab:models}. 
The propagator of the OU process reads
\begin{equation}\label{eq:OU_propagator}
    \rho^{X}_{t|0}(x|y) = \phi\left(x;\mu+(y-\mu)e^{-kt},\frac{D}{\sqrt{k}}\sqrt{1-e^{-2kt}}\right),
\end{equation}
with
$\phi(\cdot;\cdot,\cdot)$ the
Gaussian density of Eq. ~\eqref{eq:gauss}.

The propagator for the CIR model is, 
\begin{equation}\label{eq:propagator_CIR}
    \rho^{X}_{t|0}(x|y) = c\,e^{-u-v}\left(\frac{u}{v}\right)^{q/2}I_q\left(2\sqrt{uv}\right),
\end{equation}
where $q = 2\mu k/D^2-1$, $u = cye^{-kt}$, $v = c x$, $c = 2k/[D^2(1-\exp(-kt))]$ and $I_q(\cdot)$ 
modified Bessel
function $I$ of order $q\in \{0,1,\ldots\}$;
cf. 9.6.19 at p. 376 of \cite{abramowitz1948handbook}.

We note that both the CIR and OU processes are stationary is the sense that $\rho^X_{t\mid s}(x\mid y) = \rho^X_{t-s\mid 0}(x\mid y)$.

%
\bibliography{refs}

\end{document}